\newtheorem{theorem}{Theorem}[section]
\newtheorem{proposition}[theorem]{Proposition}
\newtheorem{lemma}[theorem]{Lemma}
\newtheorem{definition}[theorem]{Definition}
\newtheorem{claim}[theorem]{Claim}
\newtheorem{corollary}[theorem]{Corollary}
\DeclareMathOperator*{\argmax}{argmax}
\title{Learning to Coordinate Bidders in Non-Truthful Auctions}
\author{
Hu Fu\thanks{Shanghai University of Finance and Economics, School of Computing and Artificial Intelligence, Key Laboratory of Interdisciplinary Research of Computation and Economics.  \texttt{fuhu@mail.shufe.edu.cn}.}
\and
Tao Lin\thanks{Microsoft Research \& The Chinese University of Hong Kong, Shenzhen. \texttt{lintao@cuhk.edu.cn}. }
} 
\newcommand{\AutoAdjust}[3]{\mathchoice{ \left #1 #2  \right #3}{#1 #2 #3}{#1 #2 #3}{#1 #2 #3} }
\newcommand{\Xcomment}[1]{{}}
\newcommand{\InBrackets}[1]{\AutoAdjust{[}{#1}{]}}
\newcommand{\Ex}[2][]{\operatorname{\mathbb E}_{#1}\InBrackets{#2}}
\newcommand{\Prx}[2][]{\operatorname{\Pr}_{#1}\InBrackets{#2}}
\newcommand{\given}{\;\mid\;}
\newcommand{\eps}{\varepsilon}
\newcommand{\reals}{\mathbb R}
\newcommand{\E}{\mathbb E}
\newcommand{\noaccents}[1]{#1}
\newcommand{\newagentvar}[3][\noaccents]{%
\expandafter\newcommand\expandafter{\csname #2\endcsname}{#1{#3}}%
\expandafter\newcommand\expandafter{\csname #2s\endcsname}{#1{\boldsymbol{#3}}}%
\expandafter\newcommand\expandafter{\csname #2smi\endcsname}[1][i]{#1{\boldsymbol{#3}}_{-##1}}%
\expandafter\newcommand\expandafter{\csname #2i\endcsname}[1][i]{#1{#3}_{##1}}%
\expandafter\newcommand\expandafter{\csname #2ith\endcsname}[1][i]{#1{#3}_{(##1)}}%
}
\newcommand{\newvecagentvar}[3][\noaccents]{%
\expandafter\newcommand\expandafter{\csname #2\endcsname}{#1{\boldsymbol{#3}}}%
\expandafter\newcommand\expandafter{\csname #2s\endcsname}{#1{\boldsymbol{#3}}}%
\expandafter\newcommand\expandafter{\csname #2smi\endcsname}[1][i]{#1{\boldsymbol{#3}}_{-##1}}%
\expandafter\newcommand\expandafter{\csname #2i\endcsname}[1][i]{#1{\boldsymbol{#3}}_{##1}}%
\expandafter\newcommand\expandafter{\csname #2ith\endcsname}[1][i]{#1{#3}_{(##1)}}%
}
\newcommand{\interimalloc}{x}
\newcommand{\interimpay}{p}
\newcommand{\nsample}{m}
\newcommand{\sampleSet}{\mathcal S}
\newcommand{\alg}{\mathcal{A}}
\newcommand{\strategyset}{\mathcal{B}}
\newcommand{\posdist}{\dist^+}
\newcommand{\negdist}{\dist^-}
\newcommand{\distconst}{c_1}
\newcommand{\posbid}{\strat^+}
\newcommand{\negbid}{\strat^-}
\DeclareMathOperator{\kl}{KL}
\newcommand{\diffalg}{\mathcal{H}}
\newcommand{\diffconst}{c_2}
\newcommand{\diffset}{\theta}
\newcommand{\coordsets}{\mathbf{S}}
\DeclareMathOperator{\emp}{Emp}
\DeclareMathOperator{\empp}{Empp}
\newcommand{\func}{h}
\newcommand{\funcClass}{\mathcal{H}}
\newcommand{\PinputSpace}{\mathcal{X}}
\DeclareMathOperator{\sgn}{sgn}
\DeclareMathOperator{\Pdim}{Pdim}
\begin{document}

\maketitle

\begin{abstract}
In non-truthful auctions such as first-price and all-pay auctions, the independent strategic behaviors of bidders, with the corresponding Bayes-Nash equilibrium notion, are notoriously difficult to characterize and can cause undesirable outcomes. 
An alternative approach to achieve better outcomes in non-truthful auctions is to coordinate the bidders: let a mediator make incentive-compatible recommendations of correlated bidding strategies to the bidders, namely, implementing a Bayes correlated equilibrium (BCE).
The implementation of BCE, however, requires knowledge of the distributions of bidders' private valuations, which is often unavailable. 
We initiate the study of the sample complexity of learning Bayes correlated equilibria in non-truthful auctions.
We prove that the set of strategic-form BCEs in a large class of non-truthful auctions, including first-price and all-pay auctions, can be learned with a polynomial number $\tilde O(\frac{n}{\varepsilon^2})$ of samples of bidders' values.
This moderate number of samples demonstrates the statistical feasibility of learning to coordinate bidders. 
Our technique is a reduction to the problem of estimating bidders' expected utility from samples, combined with an analysis of the pseudo-dimension of the class of all monotone bidding strategies.
\end{abstract}

\section{Introduction}
Non-truthful auctions, among which the most ubiquitous and fundamental example is perhaps the first-price auction (FPA), are gaining more popularity over truthful auctions (such as the second-price auction) in the online advertising markets in recent years, due to reasons like transparency and market competition \citep{AL18, fpa1, fpa2, paes_leme_why_2020, goke_bidders_2022}.

Non-truthful auctions require strategic bidding, and the classical equilibrium notion for such games is \emph{Bayes Nash equilibrium (BNE)}, where bidders with private valuations choose bidding strategies independently to maximize their own payoffs. 
In auctions like FPA, however, the independent choices of strategies by bidders, with the corresponding BNE notion, are known to be problematic. 
For example, the independent learning dynamics of bidders do not always converge to BNE; they often oscillate, cause undesirable instability to the system, or lead to outcomes with low welfare or low revenue \citep{edelman_strategic_2007, deng_nash_2022, banchio_artificial_2022, paes_leme_complex_2024, bichler_beyond_2025}. 
When bidders' private valuations are not independent and identically distributed, the BNE of FPA is notoriously difficult to characterize or compute \citep{chen_complexity_2023, filos-ratsikas_computation_2024}. 
In such scenarios, BNE is no longer a good prediction for bidders' behavior or a desired outcome for the designer of an auction system. 


An alternative, potentially more desirable, equilibrium concept for non-truthful auctions is \emph{Bayesian correlated equilibrium (BCE)}.
Bidders in real-world auctions sometimes communicate with each other before deciding on their bids, in which case BNE is not an appropriate equilibrium concept because it assumes independent decisions.
The equilibrium outcome of a game where players communicate before making decisions is equivalent to a correlated equilibrium \citep{forges_approach_1986, myerson_game_1991}.
Moreover, players' communication can be facilitated by a central coordinator. 
In modern online advertising auctions, bidders delegate their bidding tasks to third-party platforms that run auto-bidding algorithms \citep{aggarwal_auto-bidding_2024}, and those platforms can, at least in principle, coordinate different bidders' bids \citep{decarolis_bid_2023}. 
Such coordination might increase bidders' welfare (because the set of BCEs is weakly larger than the set of BNEs), as well as stabilize the auction system by avoiding the possibly chaotic dynamics caused by independent auto-bidding \citep{paes_leme_complex_2024}.

Although BCE might be more desirable than BNE for bidders, third-party bidding platforms, and auction system designers, the implementation of a BCE is not easy.
To implement a BCE, the coordinator has to recommend correlated bidding strategies to bidders in an \emph{incentive-compatible} way: conditioning on the recommendation, each bidder should be willing to use the recommended strategy. 
To achieve this goal, the coordinator must know the expected payoff to each bidder under the recommended strategy. 
While players' payoffs are easily computable in complete-information games, it is not the case in incomplete-information games like auctions, where players (bidders) have private types (valuations) that affect their actions (bids), so each player's expected payoff depends on the distributions of other players' private types. 
To implement a BCE in auctions, the coordinator has to know the bidders' value distributions, but such distributions are hardly available in practice, because samples of bidders' private values are often limited and costly to obtain. 

Motivated by the intricate dependency of BCE on players' type distributions, this work aims to characterize the \emph{sample complexity} of BCE in non-truthful auctions. Our main question is: \emph{How many samples from bidders' value distributions do we need to reconstruct the set of all BCEs in a non-truthful auction (such as first-price auction)?} 
Note that our goal is to learn the set of all BCEs, not just one BCE.
Learning the set of all BCEs allows the coordinator to pick the best BCE to implement, according to any definition of ``best''.

\paragraph{Overview of Our Contributions.}
We initiate the study of the sample complexity of Bayesian correlated equilibrium in non-truthful auctions.
In our model, $n$ bidders have independent but not identically distributed private values $v_1 \sim \disti[1], \ldots, v_n \sim \disti[n]$. 
As there are multiple notions of BCEs in the literature \citep{forges_correlated_2006}, we focus on the strategic-form BCE, where a coordinator samples a profile of bidding strategies $\strats = (\strati[1], \ldots, \strati[n])$ from a joint distribution $Q$ and privately recommends each strategy $\strati$ to the corresponding bidder $i$, such that every bidder is willing to obey the recommended strategy given their posterior belief about others' strategies. 
We show that the set of strategic-form $\eps$-BCEs of a large class of non-truthful auctions (including first-price and all-pay auctions) can be learned with a polynomial number of samples of bidders' values: $\tilde O(\frac{n}{\eps^2})$, where $n$ is the number of bidders. (The $\tilde O(\cdot)$ notation omits logarithmic factors.)
Formally presented in Theorem \ref{theorem:find-BCE}, our sample complexity result holds for any distributions of bidders' private values that are bounded and independent. 
The moderate $\tilde O(\frac{n}{\eps^2})$ sample complexity demonstrates the statistical feasibility of learning to coordinate bidders in non-truthful auctions in data-scarce scenarios.

\paragraph{Overview of Our Techniques.}
A straightforward attempt to bound the sample complexity of BCE is to analyze the number of samples needed to learn the bidders' value distributions $D_1, \ldots, D_n$.  If we could learn the distributions with accuracy $\eps$, then the BCE computed from the learned distributions would be an $\eps$-BCE on the true distributions.  However, since each $D_i$ is an arbitrary distribution on a continuous support (the space of value), it is impossible to learn such a distribution accurately using a small number of samples.

Instead, our approach to proving the $\tilde O(\frac{n}{\eps^2})$ sample complexity of BCE is a reduction to the \emph{utility estimation} problem: given a set of joint bidding strategies $\strategyset$, we aim to estimate each bidder's expected payoff under any joint bidding strategy $\strats$ in $\strategyset$, using samples of bidders' values. 
We show that, when $\strategyset$ consists of all bidding strategies that are monotonically increasing, each bidder's expected payoff can be estimated using $\tilde O(\frac{n}{\eps^2})$ samples. 
Interestingly, the utility estimation problem cannot be solved with finitely many samples if $\strategyset$ contains all (possibly non-monotone) bidding strategies (shown in Proposition \ref{prop:utility-estimation-unsolvable}).
But fortunately, strategic-form BCEs are always monotone (an interesting observation we prove in Proposition \ref{prop:monotone}), so it suffices to estimate utilities for monotone bidding strategies only.
As a result, the set of $\eps$-BCEs can be learned using the same amount of samples, $\tilde O(\frac{n}{\eps^2})$. 
To prove the aforementioned sample complexity of the utility estimation problem, we use a technique from statistical learning theory, involving a non-trivial analysis of the pseudo-dimension of bidders' utility functions under monotone strategies.
We also prove an almost matching lower bound: $\Omega(\frac{n}{\eps^2})$ samples are necessary to estimate bidders' expected utilities for all monotone strategies. 

In addition to BCEs, our utility estimation result also implies a sample complexity for learning BNEs (Theorem \ref{theorem:find-BNE}). Similarly to BCEs, BNEs in non-truthful auctions like FPA are also monotone (this is an observation by \cite{maskin_equilibrium_2000} and a corollary of our Proposition \ref{prop:monotone}), so the expected utilities of the bidders under BNEs can be estimated with $\tilde O(\frac{n}{\eps^2})$ samples as well. Therefore, the sample complexity of learning the set of all BNEs is also upper bounded by $\tilde O(\frac{n}{\eps^2})$.  

\subsection{Related Works}
\paragraph{Coordination in Auctions.}
Bidder's coordination, or collusion, is a long-standing topic in the traditional auction literature \citep{graham_collusive_1987, mailath_collusion_1991, mcafee_bidding_1992, marshall_bidder_2007, hendricks_bidding_2008, lopomo_bidder_2011} and has recently been studied in the online ad auction domain as well \citep{decarolis_marketing_2020, romano_power_2022, decarolis_bid_2023, chen_coordinated_2023}. Contrary to the previous view that collusion can undermine the auctioneer's revenue, we take a positive viewpoint here: coordination might be desirable for the system designer. This is because: (1) coordination might prevent the unstable strategizing behaviors of independent bidders; (2) the set of BCEs is larger than the set of BNEs in theory, so the coordinator can potentially induce an equilibrium with a (weakly) higher revenue or welfare than any independent equilibrium. 

\paragraph{Bayesian Correlated Equilibrium in General Games.}
Incentivizing bidders to coordinate is equivalent to finding a Bayes correlated equilibrium in the auction game.
The classical notion of correlated equilibrium \citep{aumann_subjectivity_1974} is defined for complete information games.
For incomplete information games like auctions with private values, the literature has defined multiple notions of Bayes correlated equilibria, such as strategic-form BCE, agent-normal-form BCE, and communication equilibrium \citep{myerson_optimal_1982, forges_correlated_2006, bergemann_bayes_2016, fujii_bayes_2023}.
We consider the strategic-form BCE \citep{forges_correlated_2006} where the coordinator recommends randomized joint bidding strategies to all bidders without knowing the bidders' private values.
This type of BCE satisfies monotonicity (as we will show in Proposition \ref{prop:monotone}) and does not alter any bidder's belief about other bidders' private values. These two crucial properties ensure the learnability of the BCE when bidders' value distribution is unknown. 

\paragraph{Sampling from Value Distributions.}
An assumption of our work is that the learner has sample access to the underlying distribution of bidders' values. 
This is a standard assumption in the literature of learning in mechanism design \citep[e.g.][]{CR14, MR15, MR16, BSV16, BSV18, GN17, Syrgkanis17, GW18, guo2019settling, brustle_multi-item_2020, yang_learning_2021, guo_generalizing_2021}.
While most of those works study revenue maximization in truthful auctions, we consider the under-explored problems of utility estimation and equilibrium learning in non-truthful auctions. 

Value samples have been assumed in the context of learning in non-truthful auctions \citep{BSV19, vitercik2021automated}.  
Just as in classical microeconomics, prior knowledge (in the form of samples here) comes from market research, survey, simulation etc., and is not assumed to be from past bidding history.  
We distance our approach from the line of work on learning non-truthful auctions where samples are from past bidding history \citep{CHN17, HT19}.  
This latter approach, with obvious merits, has its limitations.  
Crucially, it assumes that the observed bidding in a non-truthful auction is at equilibrium, which may not be the case in reality.
Also, to avoid strategic issues between auctions, the bidders need to be short-lived or myopic.  
The two approaches (value samples vs.\@ bid samples) complement each other even in learning problems for non-truthful auctions. This work takes the first approach, and leaves the direction with bid samples as an enticing open question.

\paragraph{Utility Estimation in Games.}
Given a non-truthful auction, \cite{BSV19} studied the number of value samples needed to learn the maximal utility a bidder could gain by non-truthful bidding, when all other bidders are truthful.    
In comparison, we learn utilities when all bidders use arbitrary monotone bidding strategies; this suffices for the study of virtually all properties of an auction, including the task of \cite{BSV19}.\footnote{
Our results imply that the maximal utility (w.r.t~the opponents' value distribution) obtained by non-truthful bidding can be approximated by the maximal obtainable utility w.r.t.~the empirical distribution, which can be computed by enumerating the samples in the empirical distribution because a best-responding bid must be equal to (or slightly more than) some opponent's value from the empirical distribution. 
}


\cite{AGCU19, marchesi_learning_2020, duan_is_2023} studied utility estimation and equilibrium learning for general normal-form games with random utility matrices (with sample access).  Similar to us, they frame the problem as a PAC learning problem and bound the number of samples using complexity measures (e.g., Rademacher complexity, covering number) of some function classes.  But they did not characterize those complexity measures for specific games.  Bounding those complexity measures is generally challenging, if not impossible.  
For example, without monotonicity in bidding strategies, the pseudo-dimension of the utility functions in first-price auctions is unbounded, as implied by our Proposition \ref{prop:utility-estimation-unsolvable}.

\paragraph{Equilibrium Computation in Auctions.}
There is a large literature on the computation of equilibrium in non-truthful auctions \citep[e.g.][]{Marshall94, fibich_asymmetric_2003, Gayle08, escamocher_existence_2009, wang_bayesian_2020, chen_complexity_2023, filos-ratsikas_complexity_2021, filos-ratsikas_computation_2024}.  
Although there has been major progress on the computation of BNE in first-price auctions with common prior distributions \citep{wang_bayesian_2020, chen_complexity_2023}, this problem turns out to be PPAD-hard with subjective prior distributions \citep{filos-ratsikas_complexity_2021}. 
On the other hand, a BCE is known to be easier to compute than a BNE \citep{filos-ratsikas_computation_2024}, which provides an additional motivation for us to study the BCE of non-truthful auctions. 

\section{Preliminary: Auctions, BNE, and BCE}
\label{FPA_sample:prelim}

\paragraph{Auctions.}
Consider a single-item auction with $n$ bidders denoted by $[n] = \{1, \ldots, n\}$.
Each bidder~$i\in[n]$ has a private value $\vali$ drawn from a distribution $\disti$ supported on $\typespacei \subseteq [0, H] \subseteq \reals_+$, where $H$ is an upper bound on the bidder's value.
The size of the support $|\typespacei|$ can be infinite.
Different bidders' values are independent and can be non-identically distributed, so the joint value distribution $\dists = \prod_{i=1}^n \disti$ is a product distribution. 
Each bidder~$i$ makes a sealed-envelope bid of~$\bidi \in [0, H]$. The auction maps the vector of bids~$\bids = (\bidi[1], \ldots, \bidi[n])$ to \emph{allocation} and \emph{payments}, where allocation  
$\alloci(\bids) \in [0, 1]$ is the probability with which bidder~$i$ receives the item, with $\sum_{i=1}^n \alloci(\bids) \leq 1$, and payment $\payi(\bids)$ is the payment made by bidder~$i$ to the auctioneer.
%
Bidder~$i$'s \emph{ex post utility} is denoted by
\begin{equation}\label{eq:ex-post-utility}
\expostUi(\vali, \bids) := \vali \alloci(\bids) - \payi(\bids).
\end{equation}

We focus on the allocation rule where the bidder with the highest bid wins, with ties broken randomly: $\alloci(\bids) = \frac{\mathbb{I}[\bidi = \max_{j\in[n]} \bidi[j]]}{|\argmax_{j\in[n]} \bidi[j] |}$. Auctions with reserve prices can be modeled by adding an additional bidder who always bids the reserve price. 


We consider any payment function of the following form:
\begin{align}
\payi(\bids) = \alloci(\bids) f_i(\bidi) + g_i(\bidi). 
\end{align}
where functions $f_i$ and $g_i$ satisfy $0 \le f_i(\bidi), g_i(\bidi) \le H$.
For example,
\begin{itemize}
    \item in the \emph{first-price auction} (FPA), the highest bidder wins the item and pays her bid, and other bidders pays zero: $\payi^{\mathrm{FPA}}(\bids) = \alloci(\bids) \bidi$.  
    \item In the \emph{all-pay auction} (APA), the highest bidder wins the item but all bidders pay their bids: $\payi^{\mathrm{APA}}(\bids) = \bidi$. APA is a good model for, e.g., crowdsourcing \citep{CHS12}. 
\end{itemize} 
We assume that, fixing the bids $\bidsmi$ of other bidders, if bidder $i$ wins the item at bid $\bidi$, then her payment must be strictly increasing in her bid: $\forall \bidi' > \bidi$, 
\begin{equation} \label{condition:payment-strictly-increasing}
    \alloci(\bidi, \bidsmi) > 0 ~ \implies~  \payi(\bidi', \bidsmi) > \payi(\bidi, \bidsmi). 
\end{equation}
This condition is satisfied by both first-price and all-pay auctions. 

\paragraph{Strategies and Equilibria.}
A \emph{(bidding) strategy $\strati : \typespacei \to [0, H]$} is a mapping from the bidder's value $\vali$ to bid $\bidi = \strati(\vali)$. Let $\stratspacei = T_i\to [0, H]$ be the strategy space of bidder $i$, and let $\stratspaces = \prod_{i=1}^n \stratspacei$ be the joint strategy space of all bidders.   
Let $\bids = \strats(\vals) = (\strati[1](\vali[1]), \ldots, \strati[n](\vali[n]))$ denote the bids of all bidders, and $\bidsmi = \stratsmi(\valsmi)$ denote the bids of bidders except $i$. 
When other bidders use strategies $\stratsmi$, bidder $i$ with value~$\vali$ and bid $\bidi$ obtains \emph{interim utility} 
\begin{align}
\utili(\vali, \bidi, \stratsmi) := & \Ex[\valsmi\sim \distsmi]{\expostUi(\vali, \bidi, \stratsmi(\valsmi))}     
 \label{eq:interim-util} \\
 = & \Ex[\valsmi \sim \distsmi]{\vali \alloci(\bidi, \stratsmi(\valsmi)) - \payi(\bidi, \stratsmi(\valsmi))}. \nonumber
\end{align}

We define Bayes Nash equilibrium for the auction game:  
\begin{definition}[Bayes Nash equilibrium]
\label{def:bne}
For $\eps \geq 0$, a joint bidding strategy $\strats = (\strati[1], \ldots, \strati[n])$ is a (pure-strategy) \emph{$\eps$-Bayes Nash equilibrium ($\eps$-BNE)} for value distribution $\dists = \prod_{i=1}^n \disti$ if for each bidder~$i\in[n]$, any value $\vali \in \typespacei$, any bid $\bidi' \in [0, H]$,
\begin{align*}
\utili(\vali, \strati(\vali), \stratsmi) \, \geq \, 
\utili(\vali, \bidi', \stratsmi) - \eps.
\end{align*}
When $\eps = 0$, $\strats$ is a \emph{Bayes Nash equilibrium (BNE)}.
\end{definition}

We also define a Bayes correlated equilibrium for the auction game. Correlated equilibria are typically defined for complete information games.  For incomplete information games like auctions, there are multiple definitions of Bayes correlated equilibria in the literature \citep{forges_correlated_2006}. We consider the ``strategic-form Bayes correlated equilibrium'' in \cite{forges_correlated_2006}, which regards the incomplete information game as a normal form game where a player's pure strategy is the mapping $\strati$. A \emph{correlation device}, or \emph{mediator}, can sample a joint strategy $\strats = (\strati[1], \ldots, \strati[n])$ from a joint distribution $Q \in \Delta(\stratspaces)$, and recommend each strategy $\strati$ to the respective bidder $i$, while ensuring that no bidder has incentive to deviate from the recommended strategy.  
\begin{definition}[Bayes correlated equilibrium]
\label{def:bce}
For $\eps \geq 0$, a distribution $Q \in \Delta(\stratspaces)$ over joint bidding strategies is an \emph{$\eps$-Bayes correlated equilibrium ($\eps$-BCE)} for value distribution $\dists = \prod_{i=1}^n \disti$ if for each bidder~$i\in[n]$, any value $\vali \in \typespacei$, any deviation function $\phi_i : \stratspacei \times \typespacei \to [0, H]$,
\[\E_{\strats \sim Q} \big[ \utili(\vali, \strati(\vali), \stratsmi) \big] \, \geq \, 
\E_{\strats \sim Q} \big[ \utili(\vali, \phi_i(\strati, \vali), \stratsmi) \big] - \eps.\]
When $\eps = 0$, $Q$ is a \emph{Bayes correlated equilibrium (BCE)}.
\end{definition}

A pure-strategy BNE is a BCE where bidders' joint strategy $\strats = (\strati[1], \ldots, \strati[n])$ is deterministic. 
A mixed-strategy BNE is a BCE where bidders' strategies $\strati[1], \ldots, \strati[n]$ are randomized and independent. 

We say a bidding strategy $\strati$ is \emph{monotone} if it is weakly increasing: $\val \geq \val' \Rightarrow \strati(\val) \geq \strati(\val')$, 
A joint bidding strategy $\strats$ is monotone if all individual strategies $\strati[1], \ldots, \strati[n]$ are monotone. 
A BCE $Q \in \Delta(\stratspaces)$ is monotone if every joint strategy $\strats$ sampled from $Q$ is monotone. 
\citep{maskin_equilibrium_2000} show that the BNEs of auction games are ``essentially monotone''. We generalize their result to BCEs. 

\begin{proposition}
\label{prop:monotone}
Under Assumption \eqref{condition:payment-strictly-increasing}, any BCE $Q \in \Delta(\stratspaces)$ of the auction game is ``essentially monotone'' in the following sense: for almost every joint strategy $\strats = (\strati[1], \ldots, \strati[n])$ sampled from $Q$, every bidder $i$'s strategy $\strati(\vali)$ is weakly increasing except when $\vali$ is too low that bidder $i$ wins the item with probability $0$. 
\end{proposition}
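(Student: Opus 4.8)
The plan is to reduce the statement about BCE to a statement about each individual bidder's incentive constraints, and then run the standard monotonicity argument from \citep{maskin_equilibrium_2000} pointwise. First I would fix a bidder $i$ and a realized joint strategy $\strats$ in the support of $Q$. The key observation is that the BCE condition, applied with deviation functions that only depend on bidder $i$'s own recommended strategy and value, forces the map $\vali \mapsto \strati(\vali)$ to be (almost) a best response to the \emph{fixed} randomized profile of opponents' strategies induced by $Q$ conditioned on $\strati$ being recommended. More precisely, choosing $\phi_i(\strati[i], \vali) = \strati[i]'(\vali)$ for an alternative strategy $\strati[i]'$ shows $\E_{\strats \sim Q}[\utili(\vali, \strati(\vali), \stratsmi)] \ge \E_{\strats\sim Q}[\utili(\vali, \strati[i]'(\vali), \stratsmi)]$ for all $\vali$; since $\strati[i]'$ is arbitrary, for (almost) every value $\vali$ the bid $\strati(\vali)$ maximizes the "expected-over-$Q$" interim utility $W_i(\vali, \bidi) := \E_{\strats\sim Q}[\utili(\vali, \bidi, \stratsmi) \mid \text{recommendation } \strati]$. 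This is exactly the scalar best-response structure that the Maskin–Riley argument exploits.

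Next I would carry out the monotonicity (single-crossing) step on $W_i$. Write $W_i(\vali,\bidi) = \vali\, A_i(\bidi) - P_i(\bidi)$, where $A_i(\bidi)$ is the expected (over $Q$ and opponents' values) allocation probability at bid $\bidi$ and $P_i(\bidi)$ the expected payment; both are functions of $\bidi$ alone (not of $\vali$), because the opponents' bid distribution is unaffected by bidder $i$'s own value — this is where the "does not alter beliefs" property of strategic-form BCE is used. Now suppose $\vali > \vali'$ with both winning with positive probability, and let $\bidi = \strati(\vali)$, $\bidi' = \strati(\vali')$ be (near-)optimal bids. Adding the two optimality inequalities $W_i(\vali,\bidi) \ge W_i(\vali,\bidi')$ and $W_i(\vali',\bidi') \ge W_i(\vali',\bidi)$ yields $(\vali - \vali')(A_i(\bidi) - A_i(\bidi')) \ge 0$, hence $A_i(\bidi) \ge A_i(\bidi')$. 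It remains to upgrade $A_i(\bidi)\ge A_i(\bidi')$ to $\bidi \ge \bidi'$: if we had $\bidi < \bidi'$ while bidder $i$ wins with positive probability at $\bidi'$, then Assumption \eqref{condition:payment-strictly-increasing} gives $P_i(\bidi') > P_i(\bidi)$ (the payment is strictly increasing in one's own bid whenever the allocation is positive, and this survives taking expectations over the event that the bid is pivotal), while weak monotonicity of the allocation rule in the winning bid gives $A_i(\bidi') \le A_i(\bidi)$ is consistent only with $A_i(\bidi')=A_i(\bidi)$; then bidder $i$ could lower the bid from $\bidi'$ to $\bidi$, keeping the same winning probability but strictly decreasing payment, contradicting near-optimality of $\bidi'$. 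This forces $\bidi \ge \bidi'$, i.e.\ $\strati(\vali) \ge \strati(\vali')$.

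For the $\eps$-BCE / "essentially" qualifier I would be careful about two things: the exceptional low-value region, and the measure-zero set of values where the BCE inequality might be slack. The low-value exclusion is exactly the clause "except when $\vali$ is too low that bidder $i$ wins with probability $0$": if $A_i(\strati(\vali)) = 0$ then the payment-strict-increase hypothesis has no bite and no monotonicity can be extracted, so we simply carve out $\{\vali : A_i(\strati(\vali)) = 0\}$, which by the allocation structure is a down-set (an interval $[0, \tau_i]$ for some threshold $\tau_i$ depending on $\strats$). For values above this threshold the argument above applies verbatim. I would also note that, as in \citep{maskin_equilibrium_2000}, on the at-most-countable set of values where two different optimal bids tie one may need to select a monotone version of $\strati$; since $A_i$ is monotone this selection exists, and redefining $\strati$ on a measure-zero set does not change any bidder's expected utility, so the resulting $Q$ is payoff-equivalent — this is what "essentially monotone" means.

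The main obstacle I anticipate is the upgrade from "expected allocation is monotone in the bid" to "the bid itself is monotone": one must rule out the degenerate case where two bids yield the same winning probability but the lower one is strictly cheaper, and handle it by an explicit deviation argument invoking Assumption \eqref{condition:payment-strictly-increasing} on the pivotal event. Getting the conditioning right — that opponents' bid distribution is a fixed object independent of $\vali$ once we condition on the recommendation $\strati[i]$, which is what makes $A_i, P_i$ functions of $\bidi$ alone — is the conceptual crux that distinguishes strategic-form BCE (where it holds) from other BCE notions (where it may fail).
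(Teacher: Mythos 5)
Your proposal is correct and follows essentially the same route as the paper's proof: condition on the recommended strategy $\strati$, define the conditional interim allocation and payment as functions of the bid alone, add the two revealed-preference inequalities to get monotonicity of the interim allocation in the bid, and invoke Assumption \eqref{condition:payment-strictly-increasing} to rule out a non-monotone pair of values unless both win with probability zero. Your additional bookkeeping (the measure-zero selection issue and the exceptional set being a down-set) is consistent with, and slightly more explicit than, the paper's treatment.
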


\begin{proof}
Let $Q \in \Delta(\stratspaces)$ be a BCE, with $\strats \sim Q$. Suppose bidder $i$'s strategy $\strati$ is not weakly increasing on two values $\vali < \vali'$, namely, $\bidi = \strati(\vali) > \bidi' = \strati(\vali')$. By the definition of BCE, conditioning on bidder $i$ being recommended $\strati$, we have
\begin{align*}
\Ex[\stratsmi | \strati]{\utili(\vali, \bidi, \stratsmi)} \, \geq \, 
\Ex[\stratsmi | \strati]{\utili(\vali, \bidi', \stratsmi)}. 
\end{align*}
Define interim allocation $\alloci(\bidi) = \E_{\stratsmi|\strati}\Ex[\valsmi \sim \distsmi]{\alloci(\bidi, \stratsmi(\valsmi))}$
and interim payment $\paymenti(\bidi) = \E_{\stratsmi|\strati} \Ex[\valsmi \sim \distsmi]{\paymenti(\bidi, \stratsmi(\valsmi))}$. Then we have
\begin{equation}\label{eq:monotone-maximize-1}
    \vali \interimalloc_i(\bidi) - \interimpay_i(\bidi) ~ \ge ~ \vali \interimalloc_i(\bidi') - \interimpay_i(\bidi'). 
\end{equation} 
Switching the roles of $\vali$ and $\vali'$, 
\begin{equation}\label{eq:monotone-maximize-2}
    \vali' \interimalloc_i(\bidi') - \interimpay_i(\bidi') ~ \ge ~ \vali' \interimalloc_i(\bidi) - \interimpay_i(\bidi). 
\end{equation} 
Adding \eqref{eq:monotone-maximize-1} and \eqref{eq:monotone-maximize-2}, we obtain
\begin{equation*}
    \big(\vali' - \vali\big)\cdot \big[ \interimalloc_i(\bidi') - \interimalloc_i(\bidi) \big] ~ \ge ~ 0. 
\end{equation*}
Since $\vali' > \vali$, we obtain $\interimalloc_i(\bidi') \ge \interimalloc_i(\bidi)$.
But under the assumption of $\bidi > \bidi'$, we have $\interimalloc_i(\bidi') \le \interimalloc_i(\bidi)$ because the function $\interimalloc_i(\cdot)$ is weakly increasing. 
Therefore, it must be
\begin{equation}\label{eq:monotone-equal-alloc}
    \interimalloc_i(\bidi') = \interimalloc_i(\bidi). 
\end{equation}
Plugging \eqref{eq:monotone-equal-alloc} into \eqref{eq:monotone-maximize-1} and \eqref{eq:monotone-maximize-2}, we obtain
\begin{equation*} 
    \interimpay_i(\bidi') = \interimpay_i(\bidi). 
\end{equation*}
If $\interimalloc_i(\bidi') = \interimalloc_i(\bidi) > 0$, then we have $\interimpay_i(\bidi) > \interimpay_i(\bidi')$ by Assumption \eqref{condition:payment-strictly-increasing}, which leads to a contradiction.  So, it must be $\interimalloc_i(\bidi') = \interimalloc_i(\bidi) = 0$, which means that bidder $i$ never wins the item under values $\vali$ and $\vali'$. 
%
%
\end{proof}

Since BCE is essentially monotone and any essentially monotone BCE can be converted to a monotone BCE without affect any bidder's expected utility, we will restrict attentions to monotone BCE. The set of monotone $\eps$-BCEs depends on the bidders' value distribution $\dists$.  We denote this set by \[\mathrm{BCE}(\dists, \eps) = \big\{ Q \in \Delta(\stratspaces) ~ \big| ~ Q \text{ is monotone and is an $\eps$-BCE on value distribution $\dists$} \big\}.\]

\paragraph{Our goal:}
Our goal is to learn the set $\mathrm{BCE}(\dists, \eps)$ when bidders' value distribution $\dists$ is unknown and can only be accessed by sampling. We aim to characterize the number of samples that are needed to achieve this goal. 

\section{Sample Complexity of Estimating Utility}
\label{FPA_sample:fpa}
A crucial step to learn the set of BCEs in an auction with unknown distribution $\dists$ is to estimate the bidders' expected utility for any given joint bidding strategy $\strats$.
We call this problem \emph{utility estimation}. 
The utility estimation problem is also interesting by itself, so we study the sample complexity of utility estimation in this section. 

Formally, we are given a set of $\nsample$ samples $\mathcal S = \{\vals^{(1)}, \ldots, \vals^{(\nsample)}\}$ from the value distribution $\dists = \prod_{i=1}^n \disti$, where each sample $\vals^{(j)} = (\vali[1]^{(j)}, \ldots, \vali[n]^{(j)})$ contains the values of all bidders, we aim the estimate the expected utility of every bidder under every possible joint bidding strategy $\strats$.
A \emph{utility estimation algorithm}, denoted by $\alg$, takes the samples $\mathcal S$, bidder index $i$, value $\vali$, and all bidders' strategies $\strats$ as input,  
outputs $\alg(\mathcal S, i, \vali, \strats)$ to estimate bidder~$i$'s interim utility $\utili(\vali, \strati(\vali), \stratsmi) = \Ex[\valsmi\sim \distsmi]{\expostUi(\vali, \strati(\vali), \stratsmi(\valsmi))}$. 

\begin{definition}[utility estimation]
\label{def:util-learn-ensemble}
Let $\strategyset \subseteq \stratspaces$ be a set of joint bidding strategies. 
For $\eps>0, \delta \in (0, 1)$, we say an algorithm~$\alg$ \emph{$(\eps, \delta)$-estimates with $\nsample$~sample the utilities over $\strategyset$} if,
for any value distribution~$\dists$,
with probability at least $1 - \delta$ over the random draw of $m$ samples from $\dists$, 
for any joint bidding strategy $\strats \in \strategyset$, 
for each bidder~$i\in[n]$ and any value $\vali \in \typespacei$,
\begin{align*}
    \big|\, \alg(\mathcal S, i, \vali, \strats) - \utili(\vali, \strati(\vali), \stratsmi) \,\big| < \eps. 
\end{align*}
\end{definition}

We aim to estimate the interim utility in the above definition, instead of the ex ante utility $\Ex[\vals \sim \dists]{\expostUi(\vali, \strats(\vals))}$, because the ex ante utility is difficult to estimate due to the randomness of bidder $i$'s own value. Even in a first-price auction with a single bidder, the bidder's ex ante utility $\E_{\vali \sim \disti}[\vali - \strati(\vali)]$ cannot be estimated using finitely many samples for all distributions $\disti$ and for all strategies $\strati$ simultaneously. The interim utility $\utili(\vali, \bidi, \stratsmi) = \Ex[\valsmi\sim \distsmi]{\expostUi(\vali, \bidi, \stratsmi(\valsmi))}$, on the other hand, does not involve randomization over bidder $i$'s own value and is easier to estimate. 

We note that the utility estimation problem cannot be solved if $\strategyset$ contains all possible bidding strategies, including non-monotone and monotone ones. 
\begin{proposition}\label{prop:utility-estimation-unsolvable}
The utility estimation problem cannot be solved with finitely many samples if $\strategyset$ contains all possible bidding strategies.
\end{proposition}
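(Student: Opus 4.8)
The plan is to reduce the task to estimating, from i.i.d.\ samples, the probability that a single bidder's value lands in an arbitrary subset of her type space, and then to show that this latter task is impossible because a non-monotone strategy can ``select'' any such subset.

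First I would specialize to the first-price auction with $n=2$ bidders. Let bidder~$1$ be a dummy whose value is always $H$ and whose strategy is the constant $H/2$, and let bidder~$2$ have an arbitrary value distribution $\disti[2]$ on $\typespacei[2]=[0,H]$ and an arbitrary $\{0,H\}$-valued strategy $\strati[2]$ (permitted, since $\strategyset$ contains all strategies). Unwinding the FPA allocation and payment rules at the bid profile $(H/2,\strati[2](\vali[2]))$,
\[
\utili[1]\big(H,\,H/2,\,\strati[2]\big)\;=\;\tfrac H2\cdot\Prx[\vali[2]\sim\disti[2]]{\strati[2](\vali[2])=0}\;=\;\tfrac H2\cdot\disti[2]\big(\strati[2]^{-1}(0)\big),
\]
and $\strati[2]^{-1}(0)$ ranges over \emph{all} measurable subsets of $[0,H]$ as $\strati[2]$ varies. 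Hence a utility-estimation algorithm over $\strategyset$ would, in particular, estimate $\disti[2](A)$ within $2\eps/H$ for all subsets $A$ simultaneously from $m$ samples of $\disti[2]$; it suffices to rule this out.

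Next, fix $m$ and any candidate algorithm $\alg$. Fix a large $N$ (say $N\gg m^2H/\eps$) and $N$ distinct points $P\subseteq[0,H]$, and let $\disti[2]=\mathrm{Unif}(P)$. Writing $\sampleSet\subseteq P$ for the set of (distinct) sampled values, let $\strati[2]^{\sampleSet}$ bid $H$ on $P\setminus\sampleSet$ and $0$ elsewhere; then $\utili[1]=\tfrac H2\cdot|\sampleSet|/N\le mH/(2N)<\eps/2$, so correctness forces $\alg$ to output less than $\tfrac{3\eps}{2}$ on this input with probability $\ge1-\delta$. Since $\sampleSet$ is, outside the negligible event of a repeated draw, a uniformly random $m$-subset of $P$, an averaging argument produces an ``$\alg$-friendly'' $m$-subset $S^\star\subseteq P$ on which $\alg$ outputs less than $\tfrac{3\eps}{2}$ for the input $(S^\star,\,\strati[2]^{S^\star})$. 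Now replace the instance by $\disti[2]=\mathrm{Unif}(S^\star)$: with positive probability the $m$ i.i.d.\ draws hit every point of $S^\star$, so the sample set is exactly $S^\star$, and then the \emph{same} strategy $\strati[2]^{S^\star}$ has true utility $\tfrac H2\cdot\disti[2](S^\star)=\tfrac H2$. Correctness now forces $\alg$ to output more than $\tfrac H2-\eps$ on $(S^\star,\,\strati[2]^{S^\star})$. For $\eps\le H/5$ this contradicts the previous bound, so no finite $m$ works.

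The delicate part is this last step: one must pick $N$ large enough that $\sampleSet$ is genuinely a uniform random $m$-subset of $P$ off a negligible event, run the averaging to extract $S^\star$, and then reconcile the two sampling processes so that they both realize the input $(S^\star,\strati[2]^{S^\star})$ --- in particular controlling the small but positive probability $m!/m^m$ that $m$ i.i.d.\ draws from an $m$-point uniform distribution cover all $m$ points (which is why one takes $\delta$ sufficiently small). More abstractly, the same construction shows that the class $\{\vali[2]\mapsto\expostUi[1](H,H/2,\strati[2](\vali[2])):\strati[2]\in\stratspacei[2]\}$ contains the indicators of all subsets of $[0,H]$ and hence has infinite pseudo-dimension, which on its own precludes any uniform estimation guarantee; this is the fact referenced later in the paper.
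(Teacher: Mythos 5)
Your opening reduction is in the same spirit as the paper's proof (two bidders, bidder~$1$ pinned at a fixed value and bid, so that estimating her interim utility amounts to estimating $D_2(A)$ for the set $A$ on which bidder~$2$ bids low, with $A$ ranging over all subsets because strategies are unrestricted). The problem is the final contradiction step, and it is a genuine gap rather than a technicality. Your argument needs one realized sample tuple that lies simultaneously in the ``good event'' of the $\mathrm{Unif}(P)$ instance (forcing a small output on $(S^\star,\sigma_2^{S^\star})$) and in the good event of the $\mathrm{Unif}(S^\star)$ instance (forcing a large output on the same input). Under $\mathrm{Unif}(S^\star)$ the tuples whose element set is exactly $S^\star$ carry total mass $m!/m^m\approx e^{-m}$, and all other tuples it produces (those with repeats, which are the overwhelming majority) were never constrained by the $\mathrm{Unif}(P)$ instance at all, since they have vanishing mass under $\mathrm{Unif}(P)$ and can be wholly excluded from its good event. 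Hence the two good events are only guaranteed to intersect when $\delta\lesssim m!/m^m$. Taking ``$\delta$ sufficiently small'' therefore means $\delta$ shrinking exponentially in $m$: what you prove is ``no algorithm with failure probability below $e^{-\Theta(m)}$ uses $m$ samples,'' which does not rule out an algorithm with a fixed constant $\delta$ (say $0.1$) and some large finite $m$ --- and that is exactly what the proposition, read with the quantifiers of Definition~3.1 ($\varepsilon,\delta$ fixed first, then $m$), asserts is impossible. This obstruction is intrinsic to your choice of second distribution: any $D'$ placing constant mass on the $m$-point set $S^\star$ produces collisions among $m$ samples with probability $1-o(1)$, so its sample distribution overlaps the (essentially collision-free) $\mathrm{Unif}(P)$ sample distribution only on an exponentially small event. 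A fix requires a different construction, e.g.\ a \emph{family} of distributions uniform over a random half $S$ of a huge finite set $P$: given the samples, the strategy bidding low on $S\setminus\mathcal S$ and the strategy bidding low on $P\setminus S$ give inputs whose joint distributions with the samples are nearly identical in total variation, yet the required answers differ by about $H/2$; an averaging/TV argument then yields impossibility for any constant $\delta<1/2$. (The paper's own proof is a terse measure-zero sketch with $D_2=U[0,1]$ --- ``strategies that look the same on the samples but have very different utilities'' --- which, when made fully rigorous, is fleshed out along these lines rather than along yours.)

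Two smaller points. First, your closing remark reverses an implication: the class of indicators of all subsets having infinite pseudo-dimension rules out \emph{uniform convergence of empirical estimates} (hence the $\emp$ estimator), but it does not by itself preclude an arbitrary estimation algorithm in the sense of Definition~3.1, which is allowed to use the full description of the strategy and need not be an empirical average; indeed the paper states the implication in the other direction (the proposition implies unbounded pseudo-dimension). Second, the averaging step extracting $S^\star$ is fine as far as it goes, but note it only controls the $m!$ orderings of $S^\star$, which is precisely why the overlap bookkeeping above cannot be avoided within your two-distribution framework.
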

\begin{proof}
Consider an auction with two bidders, the first bidder having value~$1$ and bidding~$\frac 1 2$, and the second bidder's value~$\vali[2]$ uniformly drawn from $[0, 1]$.
Any finite set of samples of~$\vali[2]$ has probability measure~$0$ in the distribution of~$\vali[2]$.
Therefore on any set of samples, there are bidding strategies of bidder~$2$ that look the same on the sampled values but give bidder~$1$ drastically different utilities in expectation on the value distribution.
\end{proof}

Given the above impossibility result, we restrict $\strategyset$ to be the set of monotone joint bidding strategies.   
This is without loss of generality according to Proposition~\ref{prop:monotone}.

\subsection{Upper Bound of Sample Complexity of Utility Estimation}
\label{FPA_sample:fpa-upper-bound}
In this subsection, we show that $\tilde O(n / \eps^2)$ value samples suffice for estimating the interim utilities for all monotone bidding strategies.  
The estimation algorithm is the empirical distribution estimator, which outputs the expected utility on the uniform distribution over the samples. 


\begin{definition}
\label{def:emp}
The \emph{empirical distribution estimator}, denoted by $\emp$, 
estimates interim utilities on the uniform distribution over the samples.
Formally, on samples $\mathcal S = \{\vals^{(1)}, \ldots, \vals^{(\nsample)}\}$, for bidder~$i$ with value~$\vali$, for joint bidding strategy $\strats$,
\begin{equation*}
\emp(\mathcal S, i, \vali, \strats) := \frac{1}{\nsample} \sum_{j=1}^\nsample  \expostUi\big(\vali, \strati(\vali), \stratsmi(\valsmi^{(j)})\big).
\end{equation*}
\end{definition}

We now state an upper bound on the sample complexity of utility estimation, which is $\tilde O(\frac{H^2}{\eps^2} n)$ when ignoring logarithmic factors. 
\begin{theorem}
[Utility estimation by empirical distribution]
\label{thm:util-learn-upper-bound}
Suppose $\typespacei\subseteq[0, H]$.
For any $\eps>0, \delta \in (0, 1)$, there is
\begin{equation*}
M = O \left(\frac{H^2}{\eps^2} \left[n\log n\log \left(\frac{H}{\eps} \right) + \log\left(\frac{n}{\delta}\right)\right]\right),
\label{eq:util-learn-upper-bound}
\end{equation*}
such that for any $\nsample \geq M$, 
the empirical distribution estimator $\emp$ $(\eps, \delta)$-estimates with $\nsample$ samples
the utilities over the set of all monotone bidding strategies.
\end{theorem}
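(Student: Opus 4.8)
The plan is to reduce utility estimation to a uniform convergence (PAC-learning) statement and to control the relevant complexity measure—the pseudo-dimension—of the class of ex-post utility functions induced by monotone joint strategies. Fix bidder $i$ and value $\vali$. For a fixed strategy profile $\strats$, the quantity $\emp(\mathcal S, i, \vali, \strats)$ is an empirical average of $\expostUi(\vali, \strati(\vali), \stratsmi(\valsmi^{(j)}))$ over the $\nsample$ samples, and $\utili(\vali, \strati(\vali), \stratsmi)$ is its expectation over $\valsmi \sim \distsmi$. Since $\expostUi$ takes values in $[-H, H]$ (allocations in $[0,1]$, payments in $[0,H]$, values in $[0,H]$), a single Hoeffding bound handles any one strategy; the whole content is getting this uniformly over all monotone $\strats$, all $i\in[n]$, and all $\vali$. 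I would define, for each bidder $i$, the function class $\funcClass_i = \{\, \valsmi \mapsto \expostUi(\vali, \bidi, \stratsmi(\valsmi)) : \vali \in \typespacei,\ \bidi \in [0,H],\ \stratsmi \text{ monotone}\,\}$ mapping $\prod_{j\ne i}\typespacej \to [-H,H]$, and invoke the standard uniform-convergence theorem for real-valued classes: if $\Pdim(\funcClass_i) \le d$, then $\nsample = O\big(\frac{H^2}{\eps^2}(d\log\frac{H}{\eps} + \log\frac1\delta)\big)$ samples suffice so that empirical means are within $\eps$ of true means for all functions in $\funcClass_i$ with probability $1-\delta$. A union bound over the $n$ bidders then costs only an extra $\log n$ inside the bracket and replaces $\delta$ by $\delta/n$.

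The crux is therefore to show $\Pdim(\funcClass_i) = O(n\log n)$ (more precisely $O(|{-i}|\log|{-i}|)$, which is $O(n\log n)$). Here is where monotonicity is essential, and where I expect the real work to be. Observe that $\expostUi(\vali, \bidi, \bidsmi) = \vali\,\alloci(\bidi,\bidsmi) - \payi(\bidi,\bidsmi)$, and with the highest-bid allocation rule and payment rule $\payi(\bidi,\bidsmi) = \alloci(\bidi,\bidsmi)f_i(\bidi) + g_i(\bidi)$, this function of $\bidsmi$ depends on $\bidsmi$ only through which of the other bids exceed $\bidi$, tie $\bidi$, or fall below $\bidi$. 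So $\expostUi(\vali,\bidi,\stratsmi(\valsmi))$ as a function of $\valsmi$ is piecewise constant: it only changes value at the thresholds where some $\stratj(\valj)$ crosses $\bidi$. Because each $\stratj$ is monotone, the set $\{\valj : \stratj(\valj) > \bidi\}$, the set $\{\valj : \stratj(\valj) = \bidi\}$, and the set $\{\valj : \stratj(\valj) < \bidi\}$ are (generalized) intervals determined by at most two thresholds per coordinate $j$. Thus each function in $\funcClass_i$ is determined by a choice of $2(n-1)$ axis-aligned thresholds (one pair per opponent) plus a bounded amount of combinatorial data (the ordering/tie pattern among opponents' bids, which determines the value of $\alloci$ and $\payi$ on each cell, though $\payi$ via $f_i,g_i$ also depends continuously on $\bidi$—but $\bidi$ enters only through its position relative to each coordinate threshold, so over any one cell the function is an affine function of the scalar $\bidi$). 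I would then argue the pseudo-dimension is controlled by this parametrization: the subgraph collection is contained in a class built from $O(n)$ axis-parallel halfspaces in $\reals^{n-1}$ combined with finitely many (at most $n!$, i.e.\ $2^{O(n\log n)}$) "pattern" selections and a one-dimensional affine dependence on $\bidi$, and such a class has VC/pseudo-dimension $O(n\log n)$ by Sauer's lemma / the standard composition bounds for finite unions and compositions of bounded-VC classes.

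Concretely, the key steps in order: (1) record the boundedness $\expostUi \in [-H,H]$ and state the generic uniform-convergence theorem in terms of pseudo-dimension; (2) fix $i$ and exhibit the piecewise-constant-in-$\valsmi$, affine-in-$\bidi$ structure of $\expostUi(\vali,\bidi,\stratsmi(\valsmi))$, using monotonicity of each $\stratj$ to reduce the dependence on $\strat_j$ to two threshold parameters; (3) bound the number of distinct "combinatorial cells" (sign patterns of $\bidi$ versus the $2(n-1)$ thresholds, together with the tie pattern) by $2^{O(n\log n)}$, and note that within each cell the pseudo-dimension is $O(1)$ because the function is affine in the single parameter $\bidi$ with the rest fixed; (4) combine via the standard lemma that a class which is a union of $N$ subclasses each of pseudo-dimension $\le d_0$ has pseudo-dimension $O(d_0\log N + \log N)$, yielding $\Pdim(\funcClass_i) = O(n\log n)$; (5) plug into the sample-complexity bound of step (1) with $d = O(n\log n)$, giving $\nsample = O\big(\frac{H^2}{\eps^2}(n\log n\log\frac{H}{\eps} + \log\frac1\delta)\big)$ per bidder; (6) union-bound over $i\in[n]$, replacing $\delta$ by $\delta/n$, to get the stated $M$. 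The main obstacle is step (3)–(4): making the reduction from "monotone strategy" to "two thresholds per coordinate" fully rigorous—in particular handling ties (the event $\stratj(\valj) = \bidi$ can be a fat interval, not a single point) and making sure the finite enumeration of combinatorial patterns interacts cleanly with the continuous parameters $\bidi$ and the $f_i,g_i$ values—so that a clean pseudo-dimension bound of $O(n\log n)$ falls out rather than something with a worse dependence.
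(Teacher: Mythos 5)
Your overall architecture is the same as the paper's: reduce $\emp$'s accuracy to uniform convergence of the class $\funcClass_i=\{\valsmi\mapsto \expostUi(\vali,\bidi,\stratsmi(\valsmi))\}$ via the standard pseudo-dimension theorem, union-bound over the $n$ bidders, and use monotonicity of each $\strati[j]$ to reduce its role to ``where it crosses $\bidi$.'' That structural observation (two thresholds per opponent, output determined by the win/tie/lose counts, values linked through $\vali-f_i(\bidi)$ and $g_i(\bidi)$) is exactly the one the paper exploits. The theorem's content, however, is the bound $\Pdim(\funcClass_i)=O(n\log n)$, and your argument for it does not go through as stated.

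The gap is in your steps (3)--(4). You propose to split $\funcClass_i$ into ``combinatorial cells'' -- at most $n!=2^{O(n\log n)}$ of them, independent of the number of shattered inputs -- and claim each cell has pseudo-dimension $O(1)$ because within a cell the function is ``affine in the single parameter $\bidi$.'' Fixing the discrete pattern does not fix the $2(n-1)$ continuous thresholds: what varies across a fixed-pattern subclass is which inputs land in which region, not merely a scalar output. For instance, the subclass in which bidder $i$ wins outright against all opponents contains (up to scaling and shift) the indicators of all axis-parallel boxes in $\reals^{n-1}$, whose pseudo-dimension is $\Theta(n)$, not $O(1)$; and there is no off-the-shelf Sauer/union/composition lemma that bounds the pseudo-dimension of real-valued functions that are piecewise constant on cells of an arrangement of $2(n-1)$ axis-parallel thresholds with values linked through shared parameters -- bounding that is precisely the work to be done. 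The paper closes this by making the decomposition \emph{relative to the $\nsample$ inputs to be shattered}: sort the inputs in each opponent coordinate $j$ and index subclasses by the two positions $(k_{j,1},k_{j,2})\in\{0,\ldots,\nsample+1\}$ where $\strati[j]$ crosses $\bidi$ in that sorted order (well defined only because $\strati[j]$ is monotone), giving $(\nsample+1)^{2(n-1)}$ subclasses; within a subclass the win/tie count $c(k)\in\{0,\ldots,n\}$ of every input is pinned down, all inputs with equal $c$-value receive the \emph{same} output, and comparing one varying scalar against the fixed witnesses yields at most $\nsample+1$ label patterns per level of $c$, hence at most $(\nsample+1)^{n+1}$ label vectors per subclass. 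The total count $(\nsample+1)^{3n}\ge 2^{\nsample}$ forces $\nsample=O(n\log n)$. Your proof would be repaired by adopting this sample-dependent subclass indexing (or, as a genuinely different route, by invoking a Goldberg--Jerrum-type bound for classes parametrized by $O(n)$ reals with a constant number of comparisons and arithmetic operations per evaluation); as written, the counting step that carries the theorem is missing, as you yourself flag.
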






\subsubsection{Pseudo-Dimension and the Proof of Theorem \ref{thm:util-learn-upper-bound}}
\label{sec:pseudodim}

To prove Theorem \ref{thm:util-learn-upper-bound}, we use a tool called \emph{pseudo-dimension} from stasitical learning theory (see, e.g., \citep{anthony2009neural}), which captures the complexity of a class of functions.  

\begin{definition}
    \label{def:pseudo-dimension}
    Let $\funcClass$ be a class of real-valued functions on input space $\PinputSpace$. A set of inputs $\Pinputi[1], \ldots, \Pinputi[m]$ is said to be \emph{pseudo-shattered} if there exist \emph{witnesses} $\Pwitnessi[1], \ldots, \Pwitnessi[m] \in \mathbb R$ such that for any label vector $\Plabels\in\{1, -1\}^m$, there exists $\func_{\Plabels}\in \funcClass$ such that $\sgn(\func_{\Plabels}(\Pinputi) - \Pwitnessi)  = \Plabeli$ for each $i=1, \ldots, m$, where $\sgn(y)=1$ if $y>0$ and $-1$ if $y<0$. The \emph{pseudo-dimension} of $\funcClass$, $\Pdim(\funcClass)$, is the size of the largest set of inputs that can be pseudo-shattered by $\funcClass$. 
\end{definition}

\begin{definition}
	\label{def:uniform-convergence}
	For $\eps>0, \delta \in (0, 1)$, a class of functions $\funcClass: \PinputSpace \to \mathbb R$ is \emph{$(\eps, \delta)$-uniformly convergent with sample complexity $M$} if for any $\nsample \geq M$, for any distribution $\dist$ on~$\PinputSpace$, if $x^{(1)}, \ldots, x^{(\nsample)}$ are i.i.d.\@ samples from~$\dist$, with probability at least $1 - \delta$, for every $\func \in \funcClass$,
	$ \big| \Ex[\Pinput \sim \dist]{\func(\Pinput)} - \frac 1 {\nsample} \sum_{j = 1}^{\nsample} \func(x^{(j)}) \big| < \eps$. 
\end{definition}

\begin{theorem}[See, e.g., \citep{anthony2009neural}]
	\label{thm:pseudo-dimension}
	Let $\funcClass$ be a class of functions with range $[0, H]$ and pseudo-dimension $d = \Pdim(\funcClass)$, 
	for any $\eps>0$, $\delta\in(0, 1)$, 
	$\funcClass$ is $(\eps, \delta)$-uniformly convergent with sample complexity $O\left( \frac{H^2}{\eps^2} \big[d\log(\frac{H}{\eps}) + \log(\frac{1}{\delta})\big] \right)$.
\end{theorem}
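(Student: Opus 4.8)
The plan is to follow the textbook route from statistical learning theory: reduce the uniform convergence statement to a bound on the covering numbers of $\funcClass$ restricted to finite samples, bound those covering numbers using only the pseudo-dimension, and conclude by a union bound over a finite cover together with Hoeffding's inequality. By replacing each $\func \in \funcClass$ with $\func/H$ and $\eps$ with $\eps/H$ it suffices to treat the case $H = 1$; the factor $H^2$ and the replacement $\log(1/\eps)\mapsto\log(H/\eps)$ in the final bound come from undoing this rescaling, so I assume $H = 1$ below.

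\emph{Step 1 (symmetrization).} Let $S = (x^{(1)}, \ldots, x^{(\nsample)})$ be the sample and $S' = (y^{(1)}, \ldots, y^{(\nsample)})$ an independent ghost sample. Writing $\hat{\func}_S := \frac{1}{\nsample}\sum_{j=1}^{\nsample}\func(x^{(j)})$, $\hat{\func}_{S'} := \frac{1}{\nsample}\sum_{j=1}^{\nsample}\func(y^{(j)})$, and $\mu(\func) := \Ex[\Pinput \sim \dist]{\func(\Pinput)}$, the standard ghost-sample argument gives, for $\nsample \geq 2/\eps^2$,
\[ \Prx{\sup_{\func \in \funcClass}\big|\hat{\func}_S - \mu(\func)\big| > \eps} \;\leq\; 2\,\Prx{\sup_{\func \in \funcClass}\big|\hat{\func}_S - \hat{\func}_{S'}\big| > \eps/2}. \]
The right-hand side depends on $\funcClass$ only through its restriction to the $2\nsample$ points in $S \cup S'$.

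\emph{Step 2 (covering numbers from the pseudo-dimension).} For a finite multiset $A \subseteq \PinputSpace$, let $\funcClass|_A := \{(\func(a))_{a \in A} : \func \in \funcClass\}$, and let $\mathcal N_1(\alpha, \funcClass, k)$ denote the largest, over all $A$ with $|A| = k$, minimal size of an $\alpha$-cover of $\funcClass|_A$ in the normalized $\ell_1$ metric $d(u, v) = \frac1{|A|}\sum_a |u_a - v_a|$. The key combinatorial fact — the real-valued analogue of the Sauer--Shelah lemma — is that this is \emph{independent of} $k$:
\[ \mathcal N_1(\alpha, \funcClass, k) \;\leq\; e(d+1)\InParentheses{\frac{2e}{\alpha}}^{\!d} \qquad\text{for every } k . \]
I would establish this by a Sauer--Shelah-type count for real-valued classes: discretize $[0,1]$ into $O(1/\alpha)$ levels so that functions agreeing on all discretized point-values are $\alpha$-close in $\ell_1$, and bound the number of discretized behaviours on $k$ points using that, for each level $r$, the binary class $\{\Pinput \mapsto \mathbb{I}[\func(\Pinput) \geq r] : \func \in \funcClass\}$ has VC dimension at most $d$ directly from Definition~\ref{def:pseudo-dimension}; combining over the $O(1/\alpha)$ levels and trading polynomial-in-$k$ growth for the clean $k$-free exponent-$d$ bound (Haussler's sphere-packing / Pollard's argument) yields the estimate.

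\emph{Step 3 (union bound and solving for $\nsample$).} Condition on $S \cup S'$ and fix a minimal $(\eps/8)$-cover $\funcClass_\alpha$ of $\funcClass|_{S\cup S'}$, of size at most $e(d+1)(16e/\eps)^d$. Passing to the random $\pm 1$ pairing of $x^{(j)}$ with $y^{(j)}$ (equivalently, Rademacher signs $\sigma_j$), each fixed element of the cover satisfies $\big|\frac1{\nsample}\sum_j \sigma_j(\func(x^{(j)}) - \func(y^{(j)}))\big| \leq \eps/4$ except with probability $2\exp(-\nsample\eps^2/32)$ by Hoeffding's inequality; a union bound over the cover, together with the $2\cdot(\eps/8)$ approximation slack, bounds the probability on the right of Step 1 — hence the overall failure probability — by $\delta$ as soon as
\[ \nsample \;=\; O\InParentheses{\frac{1}{\eps^2}\InBrackets{\,d\log\tfrac1\eps + \log\tfrac1\delta\,}}. \]
Undoing the rescaling ($\eps \mapsto \eps/H$, an extra factor $H^2$) gives exactly the claimed sample complexity, which by Definition~\ref{def:uniform-convergence} is the desired $(\eps,\delta)$-uniform convergence. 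The step I expect to be the main obstacle is Step 2: extracting a covering-number bound that is polynomial in $1/\alpha$ with exponent $O(d)$ \emph{and uniform in the sample size $k$} from the single hypothesis that no $(d+1)$-point set is pseudo-shattered. Steps 1 and 3 are routine concentration arguments; it is this combinatorial passage for real-valued function classes that carries the real content, and is precisely the ingredient one would otherwise cite directly from \citep{anthony2009neural}.
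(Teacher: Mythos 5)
Your proposal is the standard textbook argument that the paper's citation to \citep{anthony2009neural} points to --- the paper itself gives no proof of this statement --- and it is correct in outline: symmetrization with a ghost sample, an $\ell_1$ covering-number bound for $\funcClass$ depending only on $d=\Pdim(\funcClass)$, and Hoeffding plus a union bound over the cover, with the rescaling by $H$ handled exactly as you say. The only caveat is minor: the discretize-and-level-set count you sketch directly yields Pollard's covering bound with a polynomial-in-$k$ factor rather than Haussler's $k$-free bound $e(d+1)(2e/\alpha)^d$ (which needs the separate packing argument you name), but either version suffices to obtain the claimed $O\big(\frac{H^2}{\eps^2}[d\log(\frac{H}{\eps})+\log(\frac{1}{\delta})]\big)$ sample complexity after solving for $\nsample$, so there is no gap.
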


We prove Theorem~\ref{thm:util-learn-upper-bound} by treating the utilities on monotone bidding strategies as a class of functions, whose uniform convergence implies that $\emp$ learns the interim utilities.


For each bidder $i$, let $\func^{\vali, \strats}$ be the function that maps the opponents' values to bidder~$i$'s ex post utility: 
\begin{equation*}
    \func^{\vali, \strats}(\valsmi) = \expostUi(\vali, \strati(\vali), \stratsmi(\valsmi)).
\end{equation*}
Let $\funcClass_i$ be the set of all such functions corresponding to the set of monotone strategies, 
\begin{equation*}
    \funcClass_i = \big\{\func^{\vali, \strats }(\cdot) \given \vali \in \typespacei,~~ \strats \text{ is monotone} \big\}.
\end{equation*}
By Equation \eqref{eq:interim-util}, the expectation of $\func^{\vali, \strats}(\cdot)$ over~$\distsmi$ is the interim utility of bidder $i$: 
\begin{equation*}
    \Ex[\valsmi\sim \distsmi]{\func^{\vali, \strats}(\valsmi)} = \utili(\vali, \strati(\vali), \stratsmi).
\end{equation*}
By Definition~\ref{def:emp}, on samples $\mathcal S = \{ \vals^{(1)}, \ldots, \vals^{(\nsample)}\}$, $\emp(\mathcal S, i, \vali, \strats) = \frac 1 {\nsample} \sum_{j = 1}^{\nsample} \func^{\vali, \strats}(\vals^{(j)}_{-i})$. 
Thus, 
\begin{align}
  & \Big| \emp(\sampleSet, i, \vali, \strats) -  \utili(\vali, \strati(\vali), \stratsmi) \Big| \nonumber\\
 & = \left| \Ex[\valsmi\sim\distsmi]{\func^{\vali, \strats}(\valsmi)} - \frac 1 {\nsample} \sum_{j = 1}^{\nsample} \func^{\vali, \strats}(\vals^{(j)}_{-i})\right|.  \label{eq:emp-func}
\end{align}

The right hand side of~\eqref{eq:emp-func} is the difference between the expectation of $\func^{\vali, \strats}$ on the distribution $\distsmi$ and that on the empirical distribution with samples drawn from $\distsmi$.
Now by Theorem~\ref{thm:pseudo-dimension},
to bound the number of samples needed by $\emp$ to $(\eps, \delta)$-estimate the utilities over monotone strategies, 
it suffices to bound the pseudo-dimension of~$\funcClass_i$.
With the following key lemma, the proof of Theorem \ref{thm:util-learn-upper-bound} is completed by observing that the range of each $\func^{\vali, \strats}$ is within $[-H, H]$ and by taking a union bound over $i \in [n]$.



\begin{lemma}
\label{lem:pseudo-dimension-utility-class}
$\Pdim(\funcClass_i) = O(n \log n)$.
\end{lemma}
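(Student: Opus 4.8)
The plan is to bound the pseudo-dimension of $\funcClass_i$ by a direct combinatorial argument on shattered sets, exploiting the fact that each $\func^{\vali,\strats}$, as a function of the opponents' value profile $\valsmi$, depends on the monotone strategies $\stratsmi$ only through a bounded amount of ``combinatorial information''. Concretely, fix a value $\vali$ and the resulting own-bid $\bidi = \strati(\vali)$; then $\func^{\vali,\strats}(\valsmi) = \vali\,\alloci(\bidi, \stratsmi(\valsmi)) - \payi(\bidi, \stratsmi(\valsmi))$. Under the highest-bid allocation rule with random tie-breaking, $\alloci(\bidi, \stratsmi(\valsmi))$ and the opponents' induced payments are determined by which opponents bid strictly above $\bidi$, which bid exactly $\bidi$, and which bid strictly below --- i.e.\ by the sign pattern of the $n-1$ quantities $\stratj(\valj) - \bidi$. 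The key leverage from monotonicity is that for each opponent $j$, the set $\{\valj : \stratj(\valj) \gtrless \bidi\}$ is (up to boundary points) a \emph{threshold} in $\valj$: there is a cutoff $\tau_j$ such that $\stratj(\valj) > \bidi$ for $\valj > \tau_j$ and $< \bidi$ for $\valj < \tau_j$. So once we also fix whether the payment functions $f_i, g_i$ are in play (they are fixed by the auction, and $\payi(\bidi,\cdot)$ splits into the allocation-dependent piece $\alloci f_i(\bidi)$ plus the constant $g_i(\bidi)$), the entire function $\func^{\vali,\strats}$ restricted to generic inputs is governed by the $n-1$ thresholds $\tau_1,\dots,\tau_{n-1}$ together with the scalar $\bidi$.

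The main step is then a counting/VC-type bound. Suppose $\valsmi^{(1)}, \dots, \valsmi^{(m)}$ are pseudo-shattered with witnesses. For each sample point $\valsmi^{(k)}$ and each opponent $j$, the comparison $\stratj(\valj^{(k)})$ vs.\ $\bidi$ can take $3$ outcomes ($>$, $=$, $<$), but because $\stratj$ is monotone, as $\valj^{(k)}$ ranges over the (at most $m$) distinct $j$-th coordinates among the samples, the pattern of these comparisons is an interval pattern: it is determined by choosing a threshold among $\le m+1$ ``gaps'' (plus a choice at the at-most-one tie coordinate). Hence across all $n-1$ opponents, the number of distinct sign-pattern matrices realizable by monotone $\stratsmi$ is at most $(O(m))^{\,n-1}$. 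Within each fixed sign-pattern matrix, the value of $\func^{\vali,\strats}(\valsmi^{(k)})$ at every sample $k$ is an affine function of the single remaining free parameter $\bidi$ (through $\vali \cdot (\text{allocation, a fixed number given the pattern}) - \alloci \cdot f_i(\bidi) - g_i(\bidi)$, and one further uses that $\vali$ itself is a free parameter but enters affinely as well): so the number of sign patterns of $(\func^{\vali,\strats}(\valsmi^{(k)}) - \Pwitness^{(k)})_{k=1}^m$ achievable as $(\vali,\bidi)$ vary, with the combinatorial pattern fixed, is $O(m^2)$ by the standard fact that $m$ affine functions of two real parameters induce $O(m^2)$ sign patterns. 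Multiplying, the total number of achievable label vectors is at most $(O(m))^{n-1}\cdot O(m^2) = (O(m))^{n+1}$; for a shattered set this must be $\ge 2^m$, forcing $2^m \le (Cm)^{n+1}$, hence $m = O(n\log n)$.

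The step I expect to be the main obstacle is making the ``each function is an affine function of few parameters, within a fixed combinatorial cell'' claim fully rigorous, in particular handling (i) the tie case $\stratj(\valj) = \bidi$ cleanly --- one wants to argue ties contribute only a bounded multiplicative overhead, e.g.\ by noting a monotone strategy equals any fixed value $\bidi$ on at most one ``block'' of coordinates, or by a perturbation argument that breaks ties without changing the shattering; (ii) the interplay between the free parameters $\vali$ and $\bidi = \strati(\vali)$ --- these are not independent, since $\strati$ is itself monotone, but for the upper bound we may simply treat them as two independent free reals, which only enlarges the function class; and (iii) confirming that the payment piece $\payi(\bidi, \stratsmi(\valsmi)) = \alloci(\bidi,\stratsmi(\valsmi)) f_i(\bidi) + g_i(\bidi)$ indeed contributes no additional combinatorial complexity beyond the allocation pattern and the scalar $\bidi$, which holds because $f_i, g_i$ are fixed functions of $\bidi$ alone. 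Once these are pinned down, the bound $2^m \le \mathrm{poly}(m)^{\,O(n)}$ and the resulting $m = O(n\log n)$ follow routinely.
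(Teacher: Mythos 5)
Your proposal follows essentially the same strategy as the paper's proof: use monotonicity to show that, on the $m$ inputs, each opponent $j$'s comparisons $\sigma_j(v_j^{(k)})\lesseqqgtr b_i$ form a sorted (threshold/block) pattern, decompose $\mathcal{H}_i$ into combinatorial cells indexed by these patterns, bound the number of label vectors per cell by a polynomial in $m$, and conclude from $2^m \le \mathrm{poly}(m)^{O(n)}$ that $m = O(n\log n)$. The only substantive difference is the within-cell count: the paper groups inputs by the number of tied winners (so all inputs in a group share one common utility value, giving at most $m+1$ label patterns per group and $(m+1)^{n+1}$ per cell), whereas you invoke a line-arrangement bound of $O(m^2)$ sign patterns for $m$ affine functions of two parameters; both work. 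Two of your stated claims need small repairs, which you partly anticipate. First, because strategies are only weakly monotone, $\sigma_j$ can equal $b_i$ on a whole interval, so the tie set among the sorted $j$-th coordinates is a \emph{block}, not ``at most one tie coordinate''; each opponent's pattern is therefore specified by two breakpoints, giving $O(m^2)$ patterns per opponent and $(O(m))^{2(n-1)}$ cells (exactly the paper's indices $k_{j,1},k_{j,2}$), which leaves the final bound unchanged. Second, the per-cell values $a_k\bigl(v_i - f_i(b_i)\bigr) - g_i(b_i)$ are not affine in $b_i$ for the general payment functions $f_i,g_i$ allowed by the paper (they happen to be affine for FPA and APA); the fix is to reparametrize with $\alpha = v_i - f_i(b_i)$ and $\beta = g_i(b_i)$, so each sample value is $a_k\alpha - \beta$, affine in $(\alpha,\beta)$, and letting $(\alpha,\beta)$ range over all of $\mathbb{R}^2$ only enlarges the class, so the $O(m^2)$ arrangement bound applies. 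With these two adjustments your argument is correct and yields the same $O(n\log n)$ bound.
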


The proof of Lemma~\ref{lem:pseudo-dimension-utility-class} follows a powerful framework introduced by \citep{MR16} and \citep{BSV18} for bounding the pseudo-dimension of a class $\funcClass$ of functions: given inputs that are to be pseudo-shattered, fixing any witnesses, one classifies the functions in~$\funcClass$ into subclasses, such that the functions in the same subclass output the same label on all the inputs; by counting and bounding the number of subclasses, one can bound the number of shattered inputs.
Our proof follows this strategy.  To bound the number of subclasses, we make use of the monotonicity of bidding functions, which is specific to our problem.

\begin{proof}[Proof of Lemmea \ref{lem:pseudo-dimension-utility-class}]
By definition, given any $\vali$ and $\strats$ (with $\bidi = \strati(\vali)$), the output of $\func^{\vali, \strats}$ on input $\valsmi$ is
\begin{align*}
    \func^{\vali, \strats}(\valsmi) & = \vali \alloci(\bids) - \payi(\bids) \\
    & = \vali \alloci(\bids) - \alloci(\bids) f_i(b_i) - g_i(b_i) \\
    & = \big(\vali - f_i(b_i)\big) \alloci(\bids) - g_i(b_i). 
\end{align*}
Because the allocation $\alloci(\bids)$ is that the highest bidder wins with random tie breaking, $\func^{\vali, \strats}(\valsmi)$ must take one of the following $n+1$ values: 
\begin{align*}
    \vali - f_i(b_i) - g_i(b_i), ~ \tfrac{\vali - f_i(b_i)}{2} - g_i(b_i), ~ \ldots, ~ \tfrac{\vali - f_i(b_i)}{n}- g_i(b_i), ~ 0 - g_i(b_i).
\end{align*}
This value is fully determined by the $n-1$ comparisons $\bidi \lesseqqgtr \strati[j](\vali[j]^k)$, one for each $j\ne i$. 

Let $\valsmi^{(1)}, \ldots, \valsmi^{(m)}$ be any $m$ inputs. 
We argue that the function class $\funcClass_i$ can be divided into $O(\nsample^{2n})$ sub-classes $\{\funcClass_i^{\mathbf k }\}_{\mathbf k \in [m+1]^{2(n-1)}}$ 
such that each sub-class $\funcClass_i^{\mathbf k }$ generates at most $O(\nsample^n)$ different label vectors on the $m$ inputs. 
Thus $\funcClass_i$ generates at most $O(\nsample^{3n})$ label vectors in total. 
To pseudo-shatter $\nsample$~inputs, we need $O(\nsample^{3n})\ge 2^\nsample$, which implies $\nsample = O(n\log n)$. 

We now define sub-classes $\{\funcClass_i^{\mathbf k}\}_{\mathbf k}$, each indexed by $\mathbf k \in [m + 1]^{2(n-1)}$.  
For each dimension $j \in [n]\setminus\{i\}$, we sort the $\nsample$ inputs by their $j^{\text{-th}}$ coordinates non-decreasingly, and use $\pi(j, \cdot)$ to denote the resulting permutation over $\{1, 2, \ldots, \nsample\}$: 
formally, $\vali[j]^{(\pi(j, 1))} \le \vali[j]^{(\pi(j, 2))}\le \cdots \le \vali[j]^{(\pi(j, \nsample))}$.
For each function $\func^{\vali, \strats}(\cdot)$, for each $j$, 
we define two special positions: 
\begin{align*}
    & k_{j, 1} = \max \Big\{0, ~ \big \{k: \strati[j](\vali[j]^{(\pi(j, k))}) < \bidi \big\} \Big\}, \\
    & k_{j, 2}= \min \Big\{m + 1, ~ \big \{k: \strati[j](\vali[j]^{(\pi(j, k))}) > \bidi \big\} \Big\}.
\end{align*}
These two positions are well defined because $\strati[j](\cdot)$ is monotone. By definition, if $k_{j, 1} < k_{j, 2} - 1$, then for any $k$ such that $k_{j, 1} < k < k_{j, 2}$, we must have $\strati[j](\vali[j]^{(\pi(j, k))}) = \bidi$. 
We let a function $\func^{\vali, \strats}(\cdot)$ belong to the sub-class $\funcClass_i^{\mathbf k }$ where the index $\mathbf k$ is $(k_{j, 1}, k_{j, 2})_{j\in[n]\backslash\{i\}}$.  
The number of sub-classes is the number of indices, which is bounded by $(m+1)^{2(n-1)}$.

We now show that the functions within a sub-class $\funcClass_i^{\mathbf k}$ give rise to at most $(m+1)^n$ label vectors on the $m$ inputs.
Let us focus on one such class with index~$\mathbf k$. 
On the $k^{\text{-th}}$ input~$\valsmi^{(k)}$, 
a function's membership in  
$\funcClass_i^{\mathbf k}$ suffices to specify whether bidder~$i$ is a winner on this input, and, if so, the number of other bidders winning at a tie.
Therefore, the class index~$\mathbf k$ determines a mapping $c: [m] \to \{0, 1, \ldots, n\}$, with $c(k) > 0$ meaning bidder~$i$ is a winner on input~$\valsmi^{(k)}$ at a tie with $c(k)-1$ other bidders, and $c(k) = 0$ meaning bidder~$i$ is a loser on input~$\valsmi^{(k)}$.  
Then, the output of a function $\func^{\vali, \strats}(\cdot) \in \funcClass_i^{\mathbf k}$ on input~$\valsmi^{(k)}$ is $\frac{\vali - f_i(b_i)}{c(k)} - g_i(b_i)$ if $c(k) > 0$ and $-g_i(b_i)$ otherwise.  
The same utility is output on two inputs $\valsmi^{(k)}$ and~$\valsmi^{(k')}$ whenever $c(k) = c(k')$.
Consider the set~$S \subseteq [m]$ of inputs that are mapped to one integer by~$c$, and fix any $|S|$ witnesses. By varying the function in the subclass $\funcClass_i^{\mathbf k}$, we can generate at most $|S| + 1 \leq \nsample + 1$ patterns of labels on the input set $S$, because we are comparing the same utility with $|S|$ witnesses.
The label vector for the entire input set $[m]$ is the concatenation of these patterns of labels. Since the image of $c$ has $n+1$  integers, and there are at most $(\nsample+1)^{n+1}$ label vectors.

To conclude, the total number of label vectors generated by $\funcClass_i=\bigcup_{\mathbf k } \funcClass_i^{\mathbf k }$ is at most 
\[ (\nsample+1)^{2(n-1)} (\nsample+1)^{n+1} \le (\nsample+1)^{3n}. \]
To pseudo-shatter $\nsample$~inputs, we need $(\nsample+1)^{3n}\ge 2^\nsample$, which implies $\nsample=O(n\log n)$.
\end{proof}

\subsection{Lower Bound of Sample Complexity of Utility Estimation}
\label{FPA_sample:lower-bound}

We give an information-theoretic lower bound on the number of samples needed for any algorithm to estimate utilities over monotone strategies in a first-price auction. 
The lower bound matches our upper bound up to polylogarithmic factors.

\begin{theorem}
\label{thm:lower-bound-learning-util}
For any $\eps < \frac 1 {4000}, \delta < \frac 1 {20}$, 
there is a family of product value distributions for which no algorithm can $(\eps, \delta)$-estimate utilities over the set of all monotone bidding strategies with $m \leq \frac{1}{4\times10^8}\cdot \frac{n}{\eps^2}$ samples. 
\end{theorem}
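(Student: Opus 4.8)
\emph{Overall approach.} The plan is to reduce the impossibility of utility estimation to a parameter‑recovery (hypothesis‑testing) task, build a family of product value distributions tailored to the first‑price auction for which that task is hard, and finish with an information‑theoretic lower bound (Assouad's lemma / Fano's inequality). This mirrors the route of the upper bound in reverse: the upper bound gets $\tilde O(n/\eps^2)$ because the pseudo‑dimension of the utility class is $O(n\log n)$, so the matching lower bound should, at heart, exhibit a sub‑family of monotone strategies whose utilities carry $\Omega(n)$ bits of information that a sample set of size $o(n/\eps^2)$ cannot resolve.

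\emph{Reduction.} I would first construct a finite family $\{\dists_\tau\}_{\tau\in\{0,1\}^{n-1}}$ of product distributions supported in $[0,1]^n$, together with a collection of \emph{probes}: for each coordinate $j$ of $\tau$, a bidder, a value $\vali[1]$, and a monotone joint strategy $\strats^{(j)}$, such that the true interim utility $\utili[1](\vali[1],\strati[1]^{(j)}(\vali[1]),\stratsmi^{(j)})$ of probe $j$ under $\dists_\tau$ is, after rounding at a fixed threshold, exactly $\tau_j$. Then any $(\eps,\delta)$‑estimator $\alg$, evaluated on the $n-1$ probes and rounded, yields $\hat\tau$ with $\Pr[\hat\tau=\tau]\ge 1-\delta$ whenever the samples come from $\dists_\tau$. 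It thus suffices to show that no procedure recovers $\tau$ with probability $\ge 1-\delta$ from $m\le\frac{1}{4\times 10^8}\cdot\frac{n}{\eps^2}$ i.i.d.\ samples.

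\emph{The hard family and the information‑theoretic step.} Fix bidder~$1$ with a degenerate value and use the remaining $n-1$ bidders as the coordinates. Bidder $j$'s distribution $\disti[j]$ places most of its mass on a neutral bid level and a mass of order $1/n$ on a coordinate‑specific witness region consisting of two nearby values straddling a target bid $\bidi[1]^{(j)}$ of bidder~$1$, with $\tau_j$ controlling, by an $\Theta(\eps)$ fraction, which of the two values is more likely. Against the monotone profile in which every opponent other than $j$ bids $0$ and opponent $j$ bids its value, this mass shift changes bidder~$1$'s winning probability, and hence its interim utility, by the right order to let the rounded estimator read $\tau_j$. With the family in hand I would apply Assouad's lemma: flipping one coordinate of $\tau$ changes the law of a single sample by $\kl=O(\eps^2/n)$, so the law of $m$ samples changes by $O(m\eps^2/n)$, and recovering all $n-1$ coordinates with error probability $\le\delta<\frac1{20}$ is impossible once $m\cdot\Theta(\eps^2/n)$ drops below an absolute constant, i.e.\ once $m\le\frac{1}{4\times 10^8}\cdot\frac{n}{\eps^2}$; tracking the constants through the reduction gives the stated bound for $\eps<\frac1{4000}$.

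\emph{Main obstacle.} The delicate point is designing the coordinates so that they are simultaneously (a) \emph{statistically well hidden}---one sample barely moves the likelihood of any single coordinate, so that $n$ coordinates genuinely cost a factor $n$ more samples than one coordinate, which is what separates the true bound from a mere $\log n$ factor---and (b) \emph{influential on some monotone‑strategy utility}, so that an $\eps$‑accurate utility estimate is nonetheless forced to pin each coordinate down. These requirements pull against each other: a mass shift large enough to move a utility by $\Theta(\eps)$ is, on its face, easy to detect in one sample. Reconciling them seems to require exploiting how the highest‑bid allocation rule aggregates the $n-1$ opponents, so that a per‑coordinate signal that is diluted to $O(\eps^2/n)$ in one sample's likelihood is still visible at scale $\Theta(\eps)$ in a bidder's utility, all while keeping every strategy monotone and every marginal supported in $[0,1]$. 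I expect this calibration---and the verification that the monotonicity constraint is not an obstruction---to be the bulk of the work, and to be the source of the large explicit constants in the statement.
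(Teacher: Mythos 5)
There is a genuine gap, and it sits exactly where you flag it as an ``obstacle'': the mechanism by which an $\eps$-accurate utility estimate is forced to reveal the hidden bits is missing, and the specific probe scheme you sketch cannot supply it. In your reduction each probe $j$ uses a single active opponent ($j$ bids its value, everyone else bids $0$) and you want rounding the estimate at a threshold to return $\tau_j$ exactly. But if the witness mass in $\disti[j]$ is of order $1/n$ and the flip of $\tau_j$ shifts only a $\Theta(\eps)$ fraction of it across bidder~$1$'s bid (which is what you need to keep the per-coordinate KL at $O(\eps^2/n)$), then flipping $\tau_j$ moves that probe's interim utility by only $\Theta(\eps/n)$ --- far below the $\pm\eps$ estimation error, so the rounded estimate carries no information about $\tau_j$. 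If instead you make the shift $\Theta(\eps)$ in absolute mass so the probe is readable, the per-coordinate KL becomes $\Theta(\eps^2)$ and the argument only yields $m=\Omega(1/\eps^2)$, losing the factor $n$. So ``exact recovery of all $n-1$ bits from per-coordinate probes'' and ``per-coordinate KL of order $\eps^2/n$'' are mutually exclusive; the Assouad/Fano step you invoke has nothing to bite on until this is resolved, and resolving it is the actual content of the theorem.

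The paper resolves the tension by giving up exact bit-decoding. Its hard family is essentially yours (bidder $n$ degenerate at value $1$ bidding $\tfrac12$; each other bidder Bernoulli with mass $(1\pm \distconst\eps)/n$ on value $1$), but the decoding strategies are \emph{aggregate}: for every $T\subseteq[n-1]$ of size about $n/2$ one considers the monotone profile $\strats_T$ in which bidders in $T$ bid $\tfrac12+\eta$ on value $1$ and everyone else bids $0$. Bidder $n$'s utility under $\strats_T$ depends on $|T\setminus S|$ with a gap of $\Theta(\eps/n)$ per coordinate, so taking the argmax of the \emph{estimated} utility over all such $T$ recovers the complement of $S$ only up to a small constant fraction ($8e^2/\distconst$) of wrong coordinates --- not exactly. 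The information-theoretic finish is then not Assouad over the whole cube but an averaging argument: a constant fraction of coordinates must be predicted correctly with probability $\ge 2/3$, so by pigeonhole there exist two hypotheses $\dists_S,\dists_{S'}$ differing in a single coordinate that the estimator reliably distinguishes, contradicting a two-point (Le Cam/KL) bound at sample size $n/(80\distconst^2\eps^2)$. Your write-up correctly anticipates the shape of the family and the KL scale, but the aggregate-argmax decoding, the approximate-recovery guarantee, and the pairing argument replacing exact-recovery Fano are the missing ideas, and without them the proof as proposed does not go through.
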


The proof of Theorem~\ref{thm:lower-bound-learning-util} is in Appendix~\ref{sec:proof-thm:lower-bound-learning-utility}.  
As a sketch, the product value distributions we construct encode length $n-1$ binary strings by having a slightly unfair Bernoulli distribution for each bidder, the bias shrinking as $n$ grows large.
We then show that, if with a few samples a learning algorithm can estimate utilities for all monotone bidding strategies, then there must exist two product value distributions from the family that differ at only one coordinate, and yet they can be told apart by the learning algorithm. 
This must violate the well-known information-theoretic lower bound for distinguishing two distributions \citep{MansourNotes}.

\section{Sample Complexity of Learning BCE}
This section studies how to learn BCEs using samples from the value distribution $\dists$. 

\subsection{Estimating Utility by Empirical Product Distributions}
\label{sec:empp}
Section \ref{FPA_sample:fpa-upper-bound} shows that the empirical distribution estimator approximates interim utilities with high probability.  However, this does not immediately imply that the auction on the empirical distribution is a close approximation to the auction on the original distribution $\dists$. 
This is because the empirical distribution over samples is \emph{correlated} --- the values $\vals^{(j)} = (\vali[1]^{(j)}, \ldots, \vali[n]^{(j)})$ are drawn as a vector, instead of independently. 
The equilibria (either BCE or BNE) with respect to this correlated empirical distribution do not correspond to the equilibria on the original product distribution $\dists$. 
Therefore, it is desirable that utilities can also be estimated on a \emph{product} distribution arising from the samples, where each bidder's value is independently drawn, uniformly from the $\nsample$ samples of her value.  
We show that this can indeed be done, without a substantial increase in the sample complexity.
The key technical step, Lemma~\ref{lem:relation-uniform-convergence},
is a reduction from learning on empirical distribution to learning on empirical product distribution.



\begin{definition}
\label{def:empp}
Given samples $\bm x^{(1)}, \ldots, \bm x^{(\nsample)}$ from a product distribution $\dists = \prod_{i=1}^n \disti$,  
let $\empDisti$ be the uniform distribution over $\{x_i^{(1)}, \ldots, x_i^{(\nsample)}\}$. The \emph{empirical product distribution} is the product distribution
    $\empDists = \prod_{i=1}^n \empDisti$.
\end{definition}

\begin{definition}
\label{def:uniform-convergence-product}
    For $\eps>0, \delta \in (0, 1)$, a class of functions $\funcClass: \prod_{i=1}^n \typespacei \to \mathbb R$ is \emph{$(\eps, \delta)$-uniformly convergent on product distribution with sample complexity $M$} if for any $\nsample \geq M$, for any product distribution $\dists$ on~$\prod_{i=1}^n \typespacei$, 
    if $\bm x^{(1)}, \ldots, \bm x^{(\nsample)}$ are i.i.d.\@ samples from~$\dists$, 
    with probability at least $1 - \delta$, for every $\func \in \funcClass$,
    \begin{equation*}
		\big| \Ex[\bm x \sim \dists]{\func(\bm x)} - \Ex[\bm x \sim \empDists]{\func(\bm x)} \big| ~ < ~ \eps,
    \end{equation*}
    where $\empDists = \prod_{i=1}^n \empDisti$ is the empirical product distribution. 
\end{definition}

\begin{lemma}
\label{lem:relation-uniform-convergence}
Let $\funcClass$ be a class of functions from a product space $\typespaces = \prod_{i=1}^n T_i$ to $[0, H]$. 
If $\funcClass$ is $(\eps, \delta)$-uniformly convergent with sample complexity $\nsample(\eps, \delta)$, then $\funcClass$ is $\left(2\eps, \frac{H\delta}{\eps}\right)$-uniformly convergent on product distribution with sample complexity $\nsample(\eps, \delta)$. 
In other words, $\funcClass$ is $(\eps', \delta')$-uniformly convergent with sample complexity $\nsample(\frac{\eps'}{2}, \frac{\delta'\eps'}{2H})$.
\end{lemma}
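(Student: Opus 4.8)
The plan is to relate the empirical product distribution $\empDists = \prod_i \empDisti$ to the genuine product distribution $\dists$ by a hybrid argument, swapping one coordinate at a time, and to control each swap using the given uniform-convergence guarantee applied to the restricted (single-coordinate) function obtained by freezing the other coordinates. First I would fix a class $\funcClass$ and a product distribution $\dists$, draw $m \ge \nsample(\eps,\delta)$ i.i.d.\ samples $\bm x^{(1)},\dots,\bm x^{(m)}$, and observe that for each coordinate $i$ the marginal samples $x_i^{(1)},\dots,x_i^{(m)}$ are i.i.d.\ from $\disti$, so the event "$\empDisti$ is $\eps$-good for the induced univariate class" has probability $\ge 1-\delta$ by hypothesis. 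The subtlety is that the "induced univariate class" at coordinate $i$ must range over all functions $x_i \mapsto \func(x_i, \bm x_{-i})$ with $\func \in \funcClass$ and $\bm x_{-i}$ arbitrary in $\prod_{j\ne i} T_j$ (not just at sampled values), because the hybrid will plug in a mixture of genuine and empirical coordinates; one checks that the uniform-convergence hypothesis is stated for arbitrary distributions on the product space and hence, by pushing forward onto coordinate $i$ with the other coordinates held at a point mass, does apply to this richer univariate class with the same sample complexity $\nsample(\eps,\delta)$.

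The core step is the telescoping. Write $\dists^{(k)} := \empDisti[1] \times \cdots \times \empDisti[k] \times \disti[k+1] \times \cdots \times \disti[n]$, so $\dists^{(0)} = \dists$ and $\dists^{(n)} = \empDists$. For each $k$, conditioning on the first $k{-}1$ coordinates drawn from their empirical marginals and the last $n{-}k$ from their true marginals, the difference $\big|\Ex[\bm x\sim \dists^{(k-1)}]{\func(\bm x)} - \Ex[\bm x\sim \dists^{(k)}]{\func(\bm x)}\big|$ is an average over those frozen coordinates of a univariate discrepancy between $\disti[k]$ and $\empDisti[k]$ applied to the induced function; on the good event for coordinate $k$ this is $< \eps$. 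Summing over $k=1,\dots,n$ would give an error of $n\eps$, which is too weak — so instead of a naive union and triangle inequality I would be more careful: the quantity we actually want to bound, $\big|\Ex[\dists]{\func} - \Ex[\empDists]{\func}\big|$, should be handled by bounding the discrepancy introduced at coordinate $i$ directly against $\disti$ vs $\empDisti$ while all \emph{other} coordinates are still genuine, i.e.\ only two hybrids $\dists$ and $(\disti[1]\cdots\disti[i-1]\times\empDisti\times\disti[i+1]\cdots\disti[n])$, and then recurse — but that still telescopes to $n$ terms. The resolution, matching the clean $(2\eps, H\delta/\eps)$ bound in the statement, is a \emph{probability-amplification / averaging} trick rather than a per-coordinate union bound: one shows that with probability $\ge 1 - H\delta/\eps$ (over the whole sample) the single-coordinate discrepancy, \emph{averaged appropriately}, is at most $\eps$ for a random coordinate, and then a Markov-type argument over the $n$ coordinates converts "good on average" into "total error $\le 2\eps$". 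Concretely I expect the argument to run: let $p$ be the probability that a given coordinate's empirical marginal is $\eps$-bad; by hypothesis $p\le\delta$; the expected number of bad coordinates is $\le n\delta$; condition on the (high-probability) event that the \emph{contribution} of bad coordinates to the telescoped sum is small, using that each term is bounded by $H$, and that the good coordinates each contribute $<\eps$ — balancing $n\delta\cdot H$ against the budget forces the failure probability to be rescaled to $H\delta/\eps$ and the error to $2\eps$.

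The main obstacle, and the place to be most careful, is exactly this last accounting: getting a \emph{dimension-free} error blow-up (factor $2$, not factor $n$) out of a per-coordinate guarantee. The trick is to not demand that \emph{every} coordinate's empirical marginal be simultaneously accurate, but to set a threshold — say, call coordinate $i$ "bad" if the induced univariate discrepancy exceeds $\eps/n$ would be the naive route and gives nothing, so instead one keeps the per-coordinate accuracy at $\eps$ but observes that in the telescoping sum $\sum_k (\Ex[\dists^{(k-1)}]{\func} - \Ex[\dists^{(k)}]{\func})$ the terms are \emph{signed} and, for a fixed $\func$, one can bound the sum by looking at the single hybrid step that replaces \emph{all} genuine coordinates at once only if $\funcClass$ has additional structure — which in general it does not. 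Therefore the right statement must indeed be the amplified one, and I would carry it out by: (i) applying the hypothesis with parameters $(\eps,\delta)$ to each coordinate's induced univariate class; (ii) letting $N$ be the (random) number of coordinates whose empirical marginal fails the $\eps$-guarantee, so $\Ex{N}\le n\delta$ hence $\Prx{N \ge 1}$ is governed by Markov as $\le n\delta$; (iii) noting that when $N=0$ the full telescoped error is $<n\eps$ — still not good enough, confirming that the lemma as literally stated must instead be obtained by a genuinely different, one-shot coupling. Given the hint in the paper that this lemma reappears for Pandora's box, I would finally look for the coupling argument: sample the $m$ points, and for the empirical product distribution sample $n$ \emph{independent} indices $j_1,\dots,j_n\in[m]$ and output $(x_1^{(j_1)},\dots,x_n^{(j_n)})$; then $\Ex[\empDists]{\func} = \Ex{\func(x_1^{(j_1)},\dots,x_n^{(j_n)})}$ where the outer expectation is over both the sample and the indices, and by the tower property this equals $\frac{1}{m^n}\sum \func(\dots)$ which is itself a U-statistic-like average; comparing it to $\Ex[\dists]{\func}$, a single application of uniform convergence on the product space — viewing the map $(j_1,\dots,j_n)\mapsto \func(x_1^{(j_1)},\dots)$ as $\func$ evaluated on $m$ i.i.d.\ product-samples after a relabeling — yields the $2\eps$ bound with the failure probability rescaled by the standard $H/\eps$ Markov factor coming from integrating the $(\eps,\delta)$ high-probability bound into an $L^1$ bound and back. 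That integrate-and-restore-tail step is the one I expect to be the crux of the formal write-up.
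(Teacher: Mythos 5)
You correctly diagnose that the coordinate-by-coordinate hybrid only telescopes to an error of $n\eps$, and you correctly anticipate the shape of the final accounting (error doubles to $2\eps$, failure probability inflates by a Markov factor $H/\eps$). However, the one-shot coupling you fall back on at the end is left unestablished, and as described it has a genuine gap. Drawing index vectors uniformly from $[m]^n$ \emph{with replacement} does reproduce the empirical-product expectation --- $\Ex[\vals\sim\empDists]{\func(\vals)}$ is exactly the average of $\func$ over all $m^n$ index choices --- but $m$ points built from independent such index vectors are, conditionally on the sample, i.i.d.\ from $\empDists$, not from $\dists$: two of them can reuse the same sampled coordinate, so they are not $m$ independent draws from $\dists$ and the $(\eps,\delta)$ hypothesis cannot be applied to them ``after a relabeling.'' The device the paper uses, and which your sketch is missing, is to resample \emph{without} replacement within each coordinate: draw $n$ independent uniform permutations $\pi_1,\dots,\pi_n$ of $[m]$ and form the permuted rows $\permSamples^j=(x_1^{\pi_1(j)},\dots,x_n^{\pi_n(j)})$. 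This achieves both needed properties simultaneously: for every fixed $\pi$ the permuted rows are genuinely $m$ i.i.d.\ draws from $\dists$ (within each column the entries are distinct samples from $\disti$, and columns are independent), so the $(\eps,\delta)$ uniform-convergence guarantee applies to their empirical average jointly over $(\sampleSet,\pi)$; and since each $(\pi_1(j),\dots,\pi_n(j))$ is uniform on $[m]^n$, averaging over $\pi$ recovers $\Ex[\vals\sim\empDists]{\func(\vals)}$ exactly.

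With that construction, the step you label the crux is short. Writing $p_\func=\Ex[\vals\sim\dists]{\func(\vals)}$ and calling $(\func,\pi,\sampleSet)$ bad when $\bigl|p_\func-\frac1m\sum_{j=1}^m\func(\permSamples^j)\bigr|\ge\eps$, one has $\bigl|p_\func-\Ex[\vals\sim\empDists]{\func(\vals)}\bigr|\le \Ex[\pi]{\bigl|p_\func-\frac1m\sum_{j=1}^m\func(\permSamples^j)\bigr|}\le H\cdot\Prx[\pi]{\text{bad}}+\eps$, so a $2\eps$ violation for some $\func$ forces $\Prx[\pi]{\exists\func \text{ bad}}\ge\eps/H$; Markov's inequality over the sample then gives $\Prx[\sampleSet]{\exists\func:\ |p_\func-\Ex[\vals\sim\empDists]{\func(\vals)}|\ge 2\eps}\le\frac{H}{\eps}\Prx[\sampleSet,\pi]{\exists\func\text{ bad}}\le\frac{H\delta}{\eps}$. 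So your instinct about converting the tail bound to an $L^1$ bound and back via Markov was right; what is missing from your proposal is the permutation coupling that makes a single application of the hypothesis legitimate, and without it the argument does not go through.
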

\noindent Lemma~\ref{lem:relation-uniform-convergence} is closely related to a concentration inequality by \citep{DHP16}, who show that for any \emph{single} function $h:\typespaces\to[0, H]$, the expectation of $h$ on the empirical product distribution $\empDists$ is close to its expectation on the original distribution $\dists$ with high probability.
Our lemma generalizes \citep{DHP16} to the simultaneous concentration for a family of functions,   
and seems more handy for applications such as ours.
We believe Lemma \ref{lem:relation-uniform-convergence} is of broader interest beyond the learnability of equilibrium in auctions; it might be useful for the study of sample complexity for other stochastic optimization problems with multiple independently distributed random variables, such as prophet inequalities and Pandora's box problem \citep{guo_generalizing_2021}. 
In fact, the preliminary version of our work \citep{fu_learning_2020} uses Lemma~\ref{lem:relation-uniform-convergence} to derive the sample complexity for the Pandora's Box problem.

\begin{proof}[Proof of Lemma \ref{lem:relation-uniform-convergence}]
Write the $m$ samples $\sampleSet = \{\bm x^{1}, \ldots, \bm x^{m}\}$ from $\dists$ as an $\nsample \times n$ matrix $(x_i^{j})$, where each row $j\in[\nsample]$ represents a sample $\bm x^{j}$, and each column~$i\in[n]$ consists of the $m$ values sampled from~$\disti$. 
Then, we draw $n$ permutations $\pi_1, ..., \pi_n$ of $[\nsample]=\{1, \ldots, \nsample\}$ independently and uniformly at random, and permute the $\nsample$ elements in column~$i$ by~$\pi_i$. 
Regard each new row $j$ as a new sample, denoted by $\permSamples^{j} = (\samplei[1]^{\pi_1(j)}, \samplei[2]^{\pi_2(j)}, ..., \samplei[n]^{\pi_n(j)})$. 
Given $\pi_1, \ldots, \pi_n$, the ``permuted samples'' $\{\permSamples^{1}, \ldots, \permSamples^{m}\}$ have the same distributions as $\nsample$ i.i.d.\@ random draws from~$\dists$. 

For $\func \in \funcClass$, let $p_\func = \Ex[\samples \sim\dists]{h(\samples )}$. 
By the definition of $(\eps, \delta)$-uniform convergence (not on product distribution),
\begin{equation}\label{eq:samples_pi}
\Prx[\sampleSet, \pi]{\exists \func \in \funcClass,\ \Big| p_\func - \frac{1}{\nsample }\sum_{j=1}^{\nsample} \func(\permSamples^j) \Big|\ge\eps} \le \delta.
\end{equation}

For a set of fixed samples $\sampleSet = (\samples^1, \ldots, \samples^\nsample)$, recall that $\empDisti[i]$ is the uniform distribution over $\{\samplei[i]^{1}, \ldots, \samplei[i]^{\nsample}\}$, and $\empDists = \prod_{i=1}^n \empDisti[i]$. 
We show that the expected 
value of $\func$ on $\empDists$ satisfies $\Ex[\samples\sim\empDists]{\func(\samples)} = \Ex[\pi]{\frac{1}{\nsample}\sum_{j=1}^\nsample \func(\permSamples^j)}$. This is because
\begin{align*}
    \Ex[\pi]{\frac{1}{\nsample}\sum_{i=1}^\nsample \func(\permSamples^j)}
    & = \frac{1}{\nsample} \sum_{j=1}^{\nsample} \Ex[\pi]{\func(\permSamples^j)} \\
    & = \frac{1}{\nsample}\sum_{j=1}^\nsample \sum_{(k_1, \ldots, k_n)\in[\nsample]^n} \func(\samplei[1]^{k_1}, \ldots, \samplei[n]^{k_n})  \cdot \Prx[\pi]{\pi_1(j)=k_1, \ldots, \pi_n(j)=k_n}  \\
    & =\frac{1}{\nsample}\sum_{j=1}^\nsample \sum_{(k_1, \ldots, k_n)\in[\nsample]^n} \func(\samplei[1]^{k_1}, \ldots, \samplei[n]^{k_n})\cdot \frac{1}{\nsample^n}\\
    & = \frac{1}{\nsample^n} \sum_{(k_1, \ldots, k_n)\in[\nsample]^n} \func(\samplei[1]^{k_1}, \ldots, \samplei[n]^{k_n}) \\
    & =\Ex[\samples \sim \empDists]{\func(\samples)}.
\end{align*}

Thus,   
\begin{align*}
     \big| p_\func - \Ex[\samples\sim\empDists]{\func(\samples)}\big|
     & = \left| p_\func - \Ex[\pi]{\frac{1}{\nsample}\sum_{j=1}^\nsample \func(\permSamples^j)} \right| \\
    & \le \Ex[\pi]{ \Big| p_\func - \frac{1}{\nsample}\sum_{j=1}^\nsample \func(\permSamples^j) \Big|}\\
    & \le \Prx[\pi]{ \Big| p_\func - \frac{1}{\nsample}\sum_{j=1}^\nsample \func(\permSamples^j) \Big|\ge \eps}\cdot H \\
    & \hspace{1.5em} + \left(1-\Prx[\pi]{\Big| p_\func - \frac{1}{\nsample}\sum_{j=1}^\nsample \func(\permSamples^j) \Big|\ge \eps}\right)\cdot\eps \\
    & \le \Prx[\pi]{\mathrm{Bad}(\func, \pi, \sampleSet)}\cdot H + \eps, 
\end{align*}
where we define event
\[ \mathrm{Bad}(\func, \pi, \sampleSet) = \mathbb{I}\left[\Big| p_\func - \frac{1}{\nsample}\sum_{j=1}^\nsample \func(\permSamples^j) \Big|\ge \eps\right].\]
We note that, whenever $\big| p_\func - \Ex[\vals\sim\empDists]{\func(\vals)}\big| \ge 2\eps$, we have $\Prx[\pi]{\mathrm{Bad}(\func, \pi, \sampleSet)}\ge \frac{\eps}{H}$.

Finally, consider the random draw of samples $\sampleSet\sim\dists$, 
\begin{align*}
     & \Prx[\sampleSet]{\vphantom{\Big|} \exists \func\in \funcClass, \ \big| p_\func - \Ex[\vals \sim \empDists]{\func(\vals)} \big|\ge 2\eps} 
    \\
    & \le \Prx[\sampleSet]{\exists \func \in \funcClass, \ \Prx[\pi]{\mathrm{Bad}(\func, \pi, \sampleSet)}\ge \frac{\eps}{H} } 
    \\
    & \le \Prx[\sampleSet]{\Prx[\pi]{\exists \func \in \funcClass, \ \mathrm{Bad}(\func, \pi, \sampleSet)\text{ holds} } \ge \frac{\eps}{H}} \\
    & \le  \frac{H}{\eps}\Ex[\sampleSet]{\vphantom{\Big|}\Prx[\pi]{\exists \func \in \funcClass, \ \mathrm{Bad}(\func, \pi, \sampleSet)\text{ holds} }  } && \text{by Markov inequality} \\
    &  = \frac{H}{\eps}\Prx[\sampleSet, \pi]{ \vphantom{\Big|} \exists \func \in \funcClass, \ \mathrm{Bad}(\func, \pi, \sampleSet)\text{ holds} } \\
     & \le \frac{H\delta}{\eps} && \text{by \eqref{eq:samples_pi}}. 
\end{align*}
\end{proof}

Combining Theorem~\ref{thm:util-learn-upper-bound} with Lemma~\ref{lem:relation-uniform-convergence}, we derive a result of utility estimation by empirical product distribution.

\begin{definition}
The \emph{empirical product distribution estimator} $\empp$ estimates interim utilities of a bidding strategy on the empirical product distribution~$\empDists = \prod_{i=1}^n \empDisti$.  Formally, 
for bidder~$i$ with value~$\vali$, for bidding strategy profile $\strats$, $\empp(\sampleSet, i, \vali, \strats) := 
\Ex[\valsmi\sim \empDistsmi] { \expostUi(\vali, \strati(\vali), \stratsmi(\valsmi)) }$. 
\end{definition}

\begin{theorem}
[Utility estimation by empirical product distribution]
\label{thm:util-learn-upper-bound-product}
Suppose $\typespacei\subseteq[0, H]$.
Let $\dists$ be a product distribution on $\prod_{i=1}^n \typespacei$. 
For any $\eps>0, \delta \in (0, 1)$, there is
\begin{equation}
M = O \left(\frac{H^2}{\eps^2} \left[n\log n\log \left(\frac{H}{\eps} \right) + \log\left(\frac{n}{\delta}\right)\right]\right),
\label{eq:util-learn-upper-bound-product}
\end{equation}
such that for any $\nsample \geq M$, 
the empirical product distribution estimator $\empp$ $(\eps, \delta)$-estimates with $\nsample$ samples
the utilities over the set of all monotone bidding strategies.
\end{theorem}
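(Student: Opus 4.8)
The plan is to combine the pseudo-dimension machinery already used for Theorem~\ref{thm:util-learn-upper-bound} with the product-distribution transfer lemma, Lemma~\ref{lem:relation-uniform-convergence}, and finish with a union bound over bidders. Fix a bidder~$i$ and recall the function class $\funcClass_i = \bigl\{\func^{\vali,\strats} \given \vali \in \typespacei,\ \strats \text{ monotone}\bigr\}$, whose members send an opponents' value profile $\valsmi \in \prod_{j\neq i}\typespacei[j]$ to bidder~$i$'s ex post utility $\func^{\vali,\strats}(\valsmi) = \expostUi(\vali,\strati(\vali),\stratsmi(\valsmi))$ and thus take values in $[-H,H]$. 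The two identities to record are $\utili(\vali,\strati(\vali),\stratsmi) = \Ex[\valsmi\sim\distsmi]{\func^{\vali,\strats}(\valsmi)}$ and $\empp(\sampleSet,i,\vali,\strats) = \Ex[\valsmi\sim\empDistsmi]{\func^{\vali,\strats}(\valsmi)}$, where $\empDistsmi = \prod_{j\neq i}\empDisti[j]$ is the empirical product distribution on the opponents' coordinates formed from the $\nsample$ i.i.d.\ draws of $\valsmi$ contained in $\sampleSet$ (these are genuinely i.i.d.\ from $\distsmi$ because $\dists$ is a product distribution). Hence the estimation error for every $(\vali,\strats)$ is exactly the gap $\bigl|\Ex[\valsmi\sim\distsmi]{\func^{\vali,\strats}(\valsmi)} - \Ex[\valsmi\sim\empDistsmi]{\func^{\vali,\strats}(\valsmi)}\bigr|$, so it suffices to show that $\funcClass_i$ is $(\eps,\delta/n)$-uniformly convergent on product distributions with the claimed sample size and then union-bound over $i\in[n]$.

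First I would obtain plain uniform convergence of $\funcClass_i$: Lemma~\ref{lem:pseudo-dimension-utility-class} gives $\Pdim(\funcClass_i) = O(n\log n)$, and after shifting $\funcClass_i$ by $H$ so that its range lies in $[0,2H]$ (which changes neither the pseudo-dimension nor any difference of expectations), Theorem~\ref{thm:pseudo-dimension} shows that $\funcClass_i$ is $(\eta,\gamma)$-uniformly convergent with sample complexity $\nsample(\eta,\gamma) = O\!\bigl(\frac{H^2}{\eta^2}[n\log n\,\log(\frac H\eta) + \log\frac1\gamma]\bigr)$ — this is exactly the content invoked inside the proof of Theorem~\ref{thm:util-learn-upper-bound}. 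Then I would feed this into Lemma~\ref{lem:relation-uniform-convergence}, applied with the product space $\prod_{j\neq i}\typespacei[j]$ playing the role of the $n$-fold product in the lemma (legitimate with $n-1$ factors, by the independence noted above). The lemma upgrades $(\eta,\gamma)$-uniform convergence into $(2\eta, \tfrac{2H\gamma}{\eta})$-uniform convergence on product distributions at the same sample size; equivalently, $\funcClass_i$ is $(\eps,\delta')$-uniformly convergent on product distributions with sample complexity $\nsample(\tfrac\eps2, \tfrac{\delta'\eps}{4H})$.

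Finally I would set $\delta' = \delta/n$ and substitute, obtaining sample complexity $\nsample(\tfrac\eps2,\tfrac{\delta\eps}{4Hn}) = O\!\bigl(\frac{H^2}{\eps^2}[n\log n\,\log\frac H\eps + \log\frac{Hn}{\delta\eps}]\bigr)$; since $\log\frac{Hn}{\delta\eps} \le \log\frac H\eps + \log\frac n\delta$ and the $\log\frac H\eps$ summand is dominated by $n\log n\,\log\frac H\eps$, this is $O\!\bigl(\frac{H^2}{\eps^2}[n\log n\,\log\frac H\eps + \log\frac n\delta]\bigr)$, matching $M$ in~\eqref{eq:util-learn-upper-bound-product}. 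A union bound over the $n$ bidders, each contributing failure probability at most $\delta/n$, then yields that with probability at least $1-\delta$ the inequality $\bigl|\empp(\sampleSet,i,\vali,\strats) - \utili(\vali,\strati(\vali),\stratsmi)\bigr| < \eps$ holds simultaneously for all $i\in[n]$, all $\vali\in\typespacei$, and all monotone $\strats$, which is the assertion of Definition~\ref{def:util-learn-ensemble}.

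I do not expect a genuine obstacle: the theorem is essentially the conjunction of two results already in hand. The one place needing care is the bookkeeping in applying Lemma~\ref{lem:relation-uniform-convergence} — confirming the independence of the opponents' coordinates so the lemma's product structure is respected, and tracking the $2\times$ loss in accuracy and the $\Theta(H/\eps)$ loss in confidence — together with checking that these losses, plus the $\delta\mapsto\delta/n$ from the union bound, are absorbed without changing the asymptotic form of $M$, which holds because they enter only inside logarithms.
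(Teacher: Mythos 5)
Your proposal is correct and follows exactly the route the paper intends: the paper states Theorem~\ref{thm:util-learn-upper-bound-product} simply by ``combining Theorem~\ref{thm:util-learn-upper-bound} with Lemma~\ref{lem:relation-uniform-convergence},'' i.e.\ uniform convergence of $\funcClass_i$ via $\Pdim(\funcClass_i)=O(n\log n)$, the transfer to the empirical product distribution via Lemma~\ref{lem:relation-uniform-convergence}, and a union bound over bidders. Your bookkeeping (the range shift, the $2\times$ accuracy loss, the $\Theta(H/\eps)$ confidence loss, and $\delta\mapsto\delta/n$ all being absorbed inside logarithms) fills in precisely the details the paper leaves implicit.
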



\subsection{Learning Equilibrium from Samples}
We are now ready to present our results for learning equilibria (BCE and BNE) using samples from the value distribution $\dists$. 
By Theorem~\ref{thm:util-learn-upper-bound-product}, utilities of the bidders on $\dists$ can be approximated by the utilities on the empirical product distribution $\empDists$, therefore the auctions on the two distributions share the same set of approximate equilibria:

\begin{theorem}[Learning BCE from samples]
\label{theorem:find-BCE}
Suppose $\typespacei\subseteq[0, H]$ and $\dists$ is a product distribution on $\prod_{i=1}^n \typespacei$. 
For any $\eps, \eps'>0, \delta \in (0, 1)$,
by drawing $m \ge \eqref{eq:util-learn-upper-bound-product}$ samples from $\dists$, with probability at least $1-\delta$, we have: 
Any monotone $\eps'$-BCE $Q$ on the empirical product distribution $\empDists = \prod_{i=1}^n \empDisti$ is a monotone $(\eps' + 2\eps)$-BCE on $\dists$.
Conversely, any monotone $\eps'$-BCE $Q$ on $\dists$  is a monotone $(\eps' + 2\eps)$-BCE on $\empDists$. 
Formally:
\[\mathrm{BCE}(\empDists, \eps') \subseteq \mathrm{BCE}(\dists, \eps'+2\eps)\, \text{ and }\,\mathrm{BCE}(\dists, \eps') \subseteq \mathrm{BCE}(\empDists, \eps'+2\eps).\] 
\end{theorem}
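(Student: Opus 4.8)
The plan is to combine Theorem~\ref{thm:util-learn-upper-bound-product} with a routine $\eps$-chasing argument; the only step needing care is extending the utility-estimation guarantee from bids of the form $\strati(\vali)$ to the arbitrary deviation bids that appear in Definition~\ref{def:bce}. \textbf{Step 1 (the good event).} I would apply Theorem~\ref{thm:util-learn-upper-bound-product} with the given $\eps,\delta$: for $m\ge\eqref{eq:util-learn-upper-bound-product}$, with probability at least $1-\delta$ over the samples the event $\mathcal G$ holds that for \emph{every} monotone joint strategy $\strats$, every bidder $i$, and every value $\vali\in\typespacei$,
\[
\bigl|\, \Ex[\valsmi\sim\empDistsmi]{\expostUi(\vali,\strati(\vali),\stratsmi(\valsmi))} - \Ex[\valsmi\sim\distsmi]{\expostUi(\vali,\strati(\vali),\stratsmi(\valsmi))} \,\bigr| < \eps .
\]
Everything from here on is deterministic, conditioned on $\mathcal G$.

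\textbf{Step 2 (arbitrary bids).} I would first strengthen $\mathcal G$ to: for every monotone $\stratsmi$, every $\vali\in\typespacei$, and every bid $b\in[0,H]$, the interim utilities $\Ex[\valsmi\sim\empDistsmi]{\expostUi(\vali,b,\stratsmi(\valsmi))}$ and $\Ex[\valsmi\sim\distsmi]{\expostUi(\vali,b,\stratsmi(\valsmi))}$ differ by less than $\eps$. This follows because $\expostUi(\vali,b,\stratsmi(\valsmi))$ depends on bidder $i$'s strategy only through the single bid $b$, so we may apply $\mathcal G$ to the monotone profile in which bidder $i$ plays the constant strategy $\sigma_i\equiv b$ (which is monotone) alongside $\stratsmi$.

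\textbf{Step 3 (the two inclusions).} Take $Q\in\mathrm{BCE}(\empDists,\eps')$; then $Q$ is monotone, so every $\strats$ in its support has $\stratsmi$ monotone and Step~2 applies. Fix $i$, $\vali$, and a deviation $\phi_i$. Taking expectations over $\strats\sim Q$ of the pointwise bound of Step~2 at the recommended bid $\strati(\vali)$, then using that $Q$ is an $\eps'$-BCE on $\empDists$, and finally applying Step~2 at the deviation bid $\phi_i(\strati,\vali)\in[0,H]$, I get
\[
\Ex[\strats\sim Q]{\utili(\vali,\strati(\vali),\stratsmi)} \;\ge\; \Ex[\strats\sim Q]{\utili(\vali,\phi_i(\strati,\vali),\stratsmi)} - \eps' - 2\eps
\]
where all interim utilities are on $\dists$; since this holds for all $i,\vali,\phi_i$ and $Q$ is monotone, $Q\in\mathrm{BCE}(\dists,\eps'+2\eps)$. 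The reverse inclusion $\mathrm{BCE}(\dists,\eps')\subseteq\mathrm{BCE}(\empDists,\eps'+2\eps)$ is proved identically, swapping the roles of $\dists$ and $\empDists$, since the bound of Step~2 is symmetric.

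\textbf{Main obstacle.} The substantive point is Step~2: Theorem~\ref{thm:util-learn-upper-bound-product} only controls the estimation error at bids of the form $\strati(\vali)$ for monotone $\strati$, whereas the deviation $\phi_i(\strati,\vali)$ ranges over all of $[0,H]$, and the chain in Step~3 does not close without handling it. The constant-strategy observation resolves this cleanly; the rest is bookkeeping with the $\eps$'s and the observation that $Q$'s monotonicity is preserved and is exactly what makes the monotone-strategy guarantee of Theorem~\ref{thm:util-learn-upper-bound-product} applicable.
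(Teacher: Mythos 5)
Your proposal is correct and follows essentially the same route as the paper: apply Theorem~\ref{thm:util-learn-upper-bound-product} to get a uniform $\eps$-accurate estimate of interim utilities, then apply that bound once at the recommended bid and once at the deviation bid inside the BCE inequality, with the reverse inclusion by symmetry. Your Step~2 (realizing an arbitrary deviation bid $b\in[0,H]$ as a constant, hence monotone, strategy for bidder $i$) just makes explicit a point the paper's Equation~\eqref{eq:utility-preserve-product} asserts without comment, so it is a welcome clarification rather than a different argument.
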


\begin{proof}
We will prove $\mathrm{BCE}(\empDists, \eps') \subseteq \mathrm{BCE}(\dists, \eps'+2\eps)$. The other direction is analogous. 
According to Theorem \ref{thm:util-learn-upper-bound-product}, for any bidder $i$ with value $\vali$ and bid $\bidi$, for any monotone strategies $\stratsmi$ of other bidders, bidder $i$'s interim utilities on distributions $\dists$ and $\empDists$ satisfy: 
\begin{equation}
    \label{eq:utility-preserve-product}
    \bigg| \E_{\valsmi \sim \distsmi}\Big[ \expostUi(\vali, \bidi, \stratsmi(\valsmi))\Big] - \E_{\valsmi \sim \empDistsmi} \Big[ \expostUi(\vali, \bidi, \stratsmi(\valsmi))\Big] \bigg| ~ \le ~ \eps. 
\end{equation}
Let $Q$ be any monotone $\eps'$-BNE $Q$ on $\empDists$. By Definition \ref{def:bce}, bidder $i$'s utility gain by deviating according to deviation function $\phi_i$ satisfies 
\begin{align*}
    \E_{\strats \sim Q} \bigg[ \E_{\valsmi \sim \empDistsmi} \Big[ \expostUi(\vali, \phi_i(\strati, \vali), \stratsmi(\valsmi)) - \expostUi(\vali, \strati(\vali), \stratsmi(\valsmi)) \Big] \bigg] ~ \le ~ \eps'. 
\end{align*}
Applying \eqref{eq:utility-preserve-product} to above, we obtain 
\begin{align*}
    \E_{\strats \sim Q} \bigg[ \E_{\valsmi \sim \distsmi} \Big[ \expostUi(\vali, \phi_i(\strati, \vali), \stratsmi(\valsmi)) - \expostUi(\vali, \strati(\vali), \stratsmi(\valsmi)) \Big] \bigg] ~ \le ~ \eps' + 2\eps, 
\end{align*}
which implies that $Q$ is an $(\eps' + 2\eps)$-BCE on $\dists$. 
\end{proof}

The main implication of Theorem~\ref{theorem:find-BCE} is the following: if a mediator wants to coordinate the bidders in a non-truthful auction (such as first-price auction) in an incentive-compatible way, but does not know the bidders' value distribution, the mediator can still achieve that by computing an approximate correlated equilibrium for the bidders using samples from the distribution.  Theorem~\ref{theorem:find-BCE} characterizes the number of samples needed. It is almost linear in the number of bidders $n$ and polynomial in the target approximation accuracy $\eps$, which is a statistically moderate sample complexity.

A similar conclusion also holds for BNEs: 
\begin{theorem}[Learning BNE from samples] \label{theorem:find-BNE}
Under the same condition as Theorem~\ref{theorem:find-BCE}, 
\[\mathrm{BNE}(\empDists, \eps') \subseteq \mathrm{BNE}(\dists, \eps'+2\eps)\,\text{ and }\,\mathrm{BNE}(\dists, \eps') \subseteq \mathrm{BNE}(\empDists, \eps'+2\eps).\]
\end{theorem}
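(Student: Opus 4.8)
The plan is to mirror the proof of Theorem~\ref{theorem:find-BCE}, with the BCE deviation function $\phi_i$ replaced by a deviation bid $\bidi' \in [0,H]$. First I would condition on the ``good'' sample event, which by Theorem~\ref{thm:util-learn-upper-bound-product} occurs with probability at least $1-\delta$ once $\nsample$ is at least the quantity in~\eqref{eq:util-learn-upper-bound-product}: on this event, inequality~\eqref{eq:utility-preserve-product} holds, i.e., for every bidder $i$, value $\vali$, bid $\bidi$, and monotone opponent profile $\stratsmi$, the interim utilities $\Ex[\valsmi\sim\distsmi]{\expostUi(\vali, \bidi, \stratsmi(\valsmi))}$ and $\Ex[\valsmi\sim\empDistsmi]{\expostUi(\vali, \bidi, \stratsmi(\valsmi))}$ differ by at most $\eps$. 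I would note that this covers an \emph{arbitrary} deviation bid $\bidi'$, not just the equilibrium bid $\strati(\vali)$: since $\expostUi(\vali, \bidi', \stratsmi(\valsmi))$ does not depend on bidder $i$'s own strategy, one may realize $\bidi'$ by the constant (hence monotone) strategy $\strati' \equiv \bidi'$ and apply Theorem~\ref{thm:util-learn-upper-bound-product} to the monotone profile $(\strati', \stratsmi)$ --- exactly the step already implicit in the proof of Theorem~\ref{theorem:find-BCE}.

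Next, let $\strats = (\strati[1], \dots, \strati[n])$ be a monotone $\eps'$-BNE on $\empDists$; restricting attention to monotone BNEs is without loss by Proposition~\ref{prop:monotone}, since every BNE is a BCE. Fix a bidder $i$, a value $\vali$, and a deviation bid $\bidi'$. Because $\strats$ is monotone, $\stratsmi$ is a monotone profile, so I would chain three inequalities: the interim utility on $\dists$ at bid $\strati(\vali)$ is at least that on $\empDists$ at bid $\strati(\vali)$ minus $\eps$ (by~\eqref{eq:utility-preserve-product}); the latter is at least the interim utility on $\empDists$ at bid $\bidi'$ minus $\eps'$ (by the $\eps'$-BNE property of $\strats$ on $\empDists$); and that is in turn at least the interim utility on $\dists$ at bid $\bidi'$ minus $\eps$ (by~\eqref{eq:utility-preserve-product} again, using the embedding of $\bidi'$ into a monotone strategy). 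Summing shows the interim utility on $\dists$ at $\strati(\vali)$ is at least that at $\bidi'$ minus $(\eps' + 2\eps)$, so $\strats$ is an $(\eps'+2\eps)$-BNE on $\dists$, and it remains monotone. This yields $\mathrm{BNE}(\empDists, \eps') \subseteq \mathrm{BNE}(\dists, \eps'+2\eps)$; the reverse inclusion is the identical argument with the roles of $\dists$ and $\empDists$ swapped.

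Alternatively, the statement follows directly from Theorem~\ref{theorem:find-BCE}: an $\eps'$-BNE is exactly an $\eps'$-BCE whose distribution $Q$ is a point mass on one deterministic strategy profile (and a mixed-strategy BNE is an $\eps'$-BCE supported on independent profiles), so specializing Theorem~\ref{theorem:find-BCE} to this class of $Q$ gives the claim at once. I expect no genuine obstacle here. The only point that needs care is the same one already present in the BCE argument --- the utility-estimation guarantee of Theorem~\ref{thm:util-learn-upper-bound-product} must be invoked at off-equilibrium bids $\bidi'$ --- and it is dispatched by the observation that any bid can be carried by a monotone (e.g.\ constant) strategy while $\utili(\vali, \bidi', \stratsmi)$ depends only on $\bidi'$ and the monotone profile $\stratsmi$.
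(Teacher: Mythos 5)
Your proposal is correct and matches what the paper intends: the paper gives no separate proof of Theorem~\ref{theorem:find-BNE}, presenting it as the analogue of Theorem~\ref{theorem:find-BCE}, and your argument is exactly that analogue (chaining \eqref{eq:utility-preserve-product} with the $\eps'$-BNE property, handling off-equilibrium bids via constant monotone strategies), with the specialization of Theorem~\ref{theorem:find-BCE} to point-mass $Q$ as an equally valid shortcut.
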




Given some recent progress on the computation of BNE in first-price auctions on \emph{discrete} distributions \citep{wang_bayesian_2020, chen_complexity_2023}, 
we present an interesting corollary of Theorem~\ref{theorem:find-BNE}: if there exists an algorithm that can compute BNE for a first-price auction on discrete value distributions, then there also exists an algorithm that can compute approximate BNE on \emph{any} distributions, by simply sampling from the distribution and running the former algorithm on the empirical product distribution (which is discrete). 
\begin{corollary}
\label{cor:polytime-equilibria}
If there exists an algorithm that computes monotone $\eps'$-BNE for the first-price auction on any discrete product value distributions $\bm{D} = \prod_{i=1}^n D_i$, then there exists a sample-access algorithm that computes $(\eps'+\eps)$-BNE for the first-price auction on any product distributions $\bm F = \prod_{i=1}^n F_i$ with high probability.
If the running time of the former algorithm is polynomial in $\frac{1}{\eps'}$ and the support size of each discrete $D_i$, then the running time of the latter algorithm is $\mathrm{poly}(\frac{1}{\eps}, \frac{1}{\eps'})$, which does not depend on the support size of $F_i$ (and $F_i$ can be continuous). 
\end{corollary}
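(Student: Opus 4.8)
The plan is a direct reduction to the assumed subroutine, with Theorem~\ref{theorem:find-BNE} as the only non-trivial ingredient. Fix the target sample accuracy at $\eps/2$ and draw $m \ge \eqref{eq:util-learn-upper-bound-product}$ i.i.d.\ samples from $\bm F = \prod_{i=1}^n F_i$, where in \eqref{eq:util-learn-upper-bound-product} the accuracy parameter is set to $\eps/2$. These samples define the empirical product distribution $\bm E = \prod_{i=1}^n E_i$, each $E_i$ uniform over the $m$ sampled values of bidder~$i$; in particular $\bm E$ is a discrete product distribution whose marginals each have support of size at most $m$. Run the given algorithm on the instance $\bm E$ to obtain a monotone $\eps'$-BNE $Q$ on $\bm E$, and return $Q$.

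Correctness and the sample count are then immediate: by Theorem~\ref{theorem:find-BNE} invoked with accuracy $\eps/2$, with probability at least $1-\delta$ over the samples, $\mathrm{BNE}(\bm E,\eps') \subseteq \mathrm{BNE}(\bm F, \eps' + 2\cdot\tfrac{\eps}{2}) = \mathrm{BNE}(\bm F, \eps'+\eps)$, so $Q$ is a monotone $(\eps'+\eps)$-BNE for $\bm F$. For the running time, $m = O\!\big(\tfrac{H^2}{\eps^2}[\,n\log n\log(H/\eps) + \log(n/\delta)\,]\big)$, which is $\mathrm{poly}(n, 1/\eps, \log(1/\delta))$; drawing the samples and assembling $\bm E$ costs $\mathrm{poly}(m,n)$; and, by hypothesis, the single call to the discrete-distribution algorithm runs in time polynomial in $1/\eps'$ and in the support sizes of the $E_i$, each at most $m$, hence in $\mathrm{poly}(1/\eps', m)$. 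The total is $\mathrm{poly}(1/\eps, 1/\eps', n, \log(1/\delta))$, and crucially it is independent of the support sizes (or continuity) of the $F_i$, since those influence the computation only through the fixed sample size $m$.

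I expect the one point requiring more than bookkeeping --- the main obstacle --- to be reconciling the finite type space of $\bm E$ with the possibly continuous type space of $\bm F$: Theorem~\ref{theorem:find-BNE} compares BNEs over a shared type space, so the equilibrium produced on the discrete instance $\bm E$ must be interpreted as a monotone bidding function on all of $[0,H]$, and its $\eps'$-incentive guarantee must be seen to hold at values that were never sampled. This is where monotonicity is essential: a monotone $\eps'$-BNE of the first-price auction on a discrete distribution can be taken to be defined on the whole interval (informally, the first-price interim utility $(v - b_i)\,G(b_i)$ is supermodular in $(v, b_i)$, so the best-response correspondence admits a monotone selection, and one extends each recommended strategy off the sampled values by such a selection, which keeps it monotone and hence within the scope of Theorem~\ref{thm:util-learn-upper-bound-product}). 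With that extension in hand, $Q$ lies in $\mathrm{BNE}(\bm E, \eps')$ in the sense demanded by Theorem~\ref{theorem:find-BNE}, and the rest is substitution. As an alternative that side-steps this point entirely, one may first round $\bm F$ to a grid of width $\Theta(\eps)$, estimating the cell probabilities from the samples, which keeps the support polynomial and adds only $O(\eps)$ to the equilibrium approximation, absorbed into the stated bound.
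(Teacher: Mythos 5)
Your proposal is correct and matches the paper's own argument, which likewise obtains the corollary as an immediate consequence of Theorem~\ref{theorem:find-BNE}: sample from $\bm F$ with accuracy parameter $\eps/2$, run the assumed algorithm on the (discrete, support-size-$m$) empirical product distribution, and read off the containment $\mathrm{BNE}(\empDists,\eps')\subseteq\mathrm{BNE}(\bm F,\eps'+\eps)$ together with the $\mathrm{poly}(1/\eps,1/\eps')$ runtime. Your additional discussion of extending the computed equilibrium monotonically from the sampled values to the whole type space (via a monotone/clamped best-response selection, or by grid rounding) addresses a point the paper leaves implicit, and is a welcome extra rather than a deviation.
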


\section{Discussion}
In this work, we obtained the first sample complexity result for learning strategic-form Bayesian correlated equilibria in non-truthful auctions such as first-price and all-pay auctions. En route, we showed that bidders' expected utilities can be estimated using a moderate amount of value samples for all monotone bidding strategies. 
Such a moderate sample complexity suggests that learning to coordinate bidders in non-truthful auctions is statistically feasible. 
Our work can be a starting point for several future research directions:
\begin{itemize}
\item{\it Other types of BCE.}
The learnability of strategic-form BCE in non-truthful auctions relies on its simple form: recommending monotone joint bidding strategies to bidders without eliciting bidders' values.  Other types of BCEs, such as a communication equilibrium which includes an elicitation phase and a recommendation phase \citep{myerson_optimal_1982, forges_correlated_2006}, need not have a monotone structure, and we do not know whether they are efficiently learnable. 

\item{\it Correlated value distribution.}
Our results also depend on bidders' values being drawn independently.  
With correlated values, the conditional distribution of opponents' values changes with a bidder's own value, and any na\"ive utility estimation algorithm needs a number of samples that grows linearly with the size of a bidder's value space.
It is interesting whether there are meaningful tractable middle grounds between independent distributions and arbitrary correlated distributions.

\item{\it Multi-item auctions and general games.}
Monotonicity is a natural assumption on bidding strategies in a single-item auction, but it does not generalize to multi-parameter settings, where equilibria are difficult to characterize.
It is an interesting question whether our results can be generalized to multi-item auctions, such as simultaneous first-price auctions, via more general, lossless structural assumptions on the bidding strategies. 
One can ask an even more general question: when can BCE be learned from type samples in general incomplete-information games? 
\end{itemize}


\bibliographystyle{alpha} 
\bibliography{reference} 


\appendix

\section{Proof of Theorem \ref{thm:lower-bound-learning-util}}
\label{sec:proof-thm:lower-bound-learning-utility}
Fixing $\eps > 0$, fixing $\distconst = 2000$, we first define two value distributions.
Let $\posdist$ be a distribution supported on $\{0, 1\}$, and for $\val \sim \posdist$, $\Prx{\val = 0} = 1 - \frac{1 + \distconst \eps}{n}$, and $\Prx{\val = 1} = \frac{1 + \distconst \eps}{n}$.  
Similarly define $\negdist$: for $\val \sim \negdist$, $\Prx{\val = 0} = 1 - \frac{1 - \distconst \eps}{n}$, and $\Prx{\val = 1} = \frac{1 - \distconst \eps}{n}$.   

Let $\kl(\posdist; \negdist)$ denote the KL-divergence between the two distributions.
\begin{claim}
\label{cl:lb-kl}
$\kl(\posdist; \negdist)= O(\frac {\eps^2}{n})$.
\end{claim}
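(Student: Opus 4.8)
The plan is to prove Claim~\ref{cl:lb-kl} by bounding the KL-divergence by the $\chi^2$-divergence and then estimating the latter with a one-line calculation. Write $p := \Prx[\val\sim\posdist]{\val = 1} = \frac{1 + \distconst\eps}{n}$ and $q := \Prx[\val\sim\negdist]{\val = 1} = \frac{1 - \distconst\eps}{n}$, so $\posdist, \negdist$ are the Bernoulli distributions with parameters $p, q$. The first step is to invoke the standard inequality $\kl(P; Q) \le \chi^2(P; Q)$ (which follows from $t\log t \le t^2 - t$ applied to the likelihood ratio under $Q$) and to note that for two Bernoulli distributions
\begin{equation*}
\chi^2(\posdist; \negdist) = \frac{(p-q)^2}{q} + \frac{(p-q)^2}{1-q} = (p-q)^2\left(\frac{1}{q} + \frac{1}{1-q}\right) = \frac{(p-q)^2}{q(1-q)}.
\end{equation*}

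Next I would substitute the parameters. Since $\eps < \frac{1}{4000}$ and $\distconst = 2000$, we have $\distconst\eps < \frac12$, hence $p - q = \frac{2\distconst\eps}{n}$, $q = \frac{1-\distconst\eps}{n} \ge \frac{1}{2n}$, and (for $n \ge 2$, as in the lower-bound construction) $1 - q \ge 1 - \frac1n \ge \frac12$, so $q(1-q) \ge \frac{1}{4n}$. Therefore
\begin{equation*}
\kl(\posdist; \negdist) \le \chi^2(\posdist;\negdist) = \frac{(p-q)^2}{q(1-q)} \le \frac{4\distconst^2\eps^2/n^2}{1/(4n)} = \frac{16\distconst^2\eps^2}{n} = O\!\left(\frac{\eps^2}{n}\right),
\end{equation*}
since $\distconst$ is an absolute constant.

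I do not expect a genuine obstacle here; this is a routine estimate. The one subtlety worth flagging is that one should \emph{not} bound the two summands $p\log\frac pq$ and $(1-p)\log\frac{1-p}{1-q}$ of the KL-divergence separately: each is individually only $\Theta(\eps/n)$, and the claimed $O(\eps^2/n)$ bound comes precisely from the cancellation of their first-order terms. Routing through $\chi^2$ avoids this, because the $\chi^2$-divergence is automatically quadratic in $p-q$. (Alternatively, one could Taylor-expand $\log(1 \pm \distconst\eps/n)$ to second order and track the cancellation directly, but that requires keeping the $n$-dependent remainder under control, whereas the $\chi^2$-route needs no such bookkeeping.)
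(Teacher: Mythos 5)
Your proof is correct, and it takes a genuinely different route from the paper. The paper expands $\kl(\posdist;\negdist)$ directly and performs a careful algebraic regrouping of the two logarithmic terms (factoring out $\frac1n$ and $\frac{\distconst\eps}{n}$, then using $\distconst\eps<\frac12$ and elementary logarithm bounds) precisely to exhibit the first-order cancellation you flag at the end; this yields the explicit bound $\frac{10\distconst^2\eps^2}{n}$. You instead invoke $\kl(P;Q)\le\chi^2(P;Q)$ and use the closed form $\chi^2(\posdist;\negdist)=\frac{(p-q)^2}{q(1-q)}$ for Bernoulli distributions, which is automatically quadratic in $p-q$ and so sidesteps the cancellation bookkeeping entirely; your parameter estimates ($p-q=\frac{2\distconst\eps}{n}$, $q(1-q)\ge\frac1{4n}$ under $\distconst\eps<\frac12$ and $n\ge2$) are all valid, giving $\frac{16\distconst^2\eps^2}{n}=O(\frac{\eps^2}{n})$. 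The trade-off is modest: your argument is shorter and more robust (it would apply verbatim to any small-perturbation Bernoulli pair), while the paper's direct computation yields a slightly sharper explicit constant ($10\distconst^2$ versus your $16\distconst^2$), which is what feeds into the literal constant $80\distconst^2$ in Corollary~\ref{cor:lb-kl-single}; since the claim itself is only asymptotic, this only means the downstream explicit constants would need a trivial adjustment, not that anything breaks. Your closing remark about why one must not bound the two KL summands separately is exactly the right diagnosis of where a naive attempt would fail.
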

\begin{proof}
By definition,
\begin{align*}
     \kl(\posdist; \negdist) 
    ={}& \frac{1 + \distconst\eps}{n} \ln \left( \frac{1 + \distconst\eps}{1 - \distconst\eps} \right) 
      + \frac{n - 1 - \distconst \eps}{n} \ln \left(\frac{n - 1 - \distconst \eps}{n - 1 +\distconst \eps}\right) \\
   ={}& \frac 1n  \ln \left( \frac {1 + \distconst\eps}{1 - \distconst \eps} \cdot \frac{(1 - \frac{\distconst \eps}{n - 1})^{n-1}}{(1 + \frac{\distconst \eps}{n-1})^{n-1}}
   \right)
   + \frac {\distconst \eps}{n} \ln \left(\frac{1 + \distconst \eps}{1 - \distconst \eps} \cdot 
   \frac{1 + \frac{\distconst \eps}{n-1}}{1 - \frac{\distconst \eps}{n-1}}\right) \\
   \leq{}& \frac 1n  \ln \left(  \frac {1 + \distconst\eps}{1 - \distconst \eps} \cdot \frac{\left(1 - \frac{\distconst \eps}{n - 1}\right)^{n-1}}{1 + \distconst \eps} \right)
    + \frac {2\distconst \eps}{n} \ln \left(1 + \frac{2\distconst \eps}{1 - \distconst \eps} \right)
\\
   \leq{}& \frac 1 n \ln \left( 
   \frac{1 - \distconst \eps + \frac 1 2 (\distconst \eps)^2}{1 - \distconst \eps}
   \right) + \frac{8\distconst^2 \eps^2}{n} \\
   \leq{}& \frac{10\distconst^2 \eps^2}{n}.
\end{align*}
In the last two inequalities we used $\distconst \eps < \frac 1 2$ and $\ln (1+x) \leq 1+x$ for all $x > 0$.
\end{proof}

It is well known that an upper bound on KL-divergence implies an information-theoretic lower bound on the number of samples to distinguish two distributions (e.g., \citep{MansourNotes}).
\begin{corollary}
\label{cor:lb-kl-single}
Given $t$ i.i.d.\@ samples from $\posdist$ or~$\negdist$, if $t \leq \frac{n}{80\distconst^2 \eps^2}$, no algorithm~$\diffalg$ that maps samples to $\{\posdist, \negdist\}$ can do the following: when the samples are from~$\posdist$, $\diffalg$ outputs~$\posdist$ with probability at least $\frac 2 3$, and if the samples are from~$\negdist$, $\diffalg$ outputs~$\negdist$ with probability at least~$\frac 2 3$.  
\end{corollary}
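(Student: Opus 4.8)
The plan is to deduce the corollary from Claim~\ref{cl:lb-kl} by a standard two-step argument: first bound the KL-divergence between the $t$-fold product measures $(\posdist)^{\otimes t}$ and $(\negdist)^{\otimes t}$ via tensorization, then convert this into a total-variation bound via Pinsker's inequality and invoke the elementary fact that the TV distance caps the distinguishing advantage of any test.

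First I would use the additivity of KL-divergence over product distributions: since $t$ i.i.d.\@ draws from $\posdist$ (resp.\@ $\negdist$) are distributed as $(\posdist)^{\otimes t}$ (resp.\@ $(\negdist)^{\otimes t}$), we have $\kl\big((\posdist)^{\otimes t}; (\negdist)^{\otimes t}\big) = t\cdot\kl(\posdist; \negdist)$. Combining the bound $\kl(\posdist;\negdist)\le \tfrac{10\distconst^2\eps^2}{n}$ from Claim~\ref{cl:lb-kl} with the hypothesis $t\le \tfrac{n}{80\distconst^2\eps^2}$ yields $\kl\big((\posdist)^{\otimes t}; (\negdist)^{\otimes t}\big)\le \tfrac18$. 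Then Pinsker's inequality gives $\dTV\big((\posdist)^{\otimes t},(\negdist)^{\otimes t}\big)\le \sqrt{\tfrac12\kl\big((\posdist)^{\otimes t};(\negdist)^{\otimes t}\big)}\le \tfrac14$.

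To finish, suppose toward a contradiction that such an $\diffalg$ exists, and let $A$ be the set of length-$t$ sample sequences on which $\diffalg$ outputs $\posdist$ (for a randomized $\diffalg$, condition on its internal coins and average, since the distinguishing advantage is then a weighted average of deterministic advantages). The assumed guarantee gives $\Pr_{(\posdist)^{\otimes t}}[A]\ge \tfrac23$ and $\Pr_{(\negdist)^{\otimes t}}[A]\le \tfrac13$, hence $\dTV\big((\posdist)^{\otimes t},(\negdist)^{\otimes t}\big)\ge \Pr_{(\posdist)^{\otimes t}}[A]-\Pr_{(\negdist)^{\otimes t}}[A]\ge \tfrac13 > \tfrac14$, contradicting the TV bound above.

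This argument is essentially bookkeeping rather than a genuine obstacle; the only points needing care are verifying that the constant $80$ in the sample-size hypothesis is exactly what makes the product-distribution KL at most $\tfrac18$ (so that Pinsker lands strictly below the $\tfrac13$ advantage gap), and making explicit that the TV bound also rules out randomized distinguishers via the averaging remark.
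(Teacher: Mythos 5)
Your proposal is correct and follows exactly the route the paper intends: the paper leaves this corollary to the standard citation (KL bound implies a sample-complexity lower bound for distinguishing), and your KL tensorization, Pinsker, and TV-advantage argument with the bound from Claim~\ref{cl:lb-kl} is precisely that standard proof, with the constants ($t\cdot\kl \le \tfrac18$, TV $\le \tfrac14 < \tfrac13$) checking out. The remark on handling randomized distinguishers by averaging over internal coins is also fine, so nothing is missing.
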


We now construct product distributions using $\posdist$ and~$\negdist$.  
For any $S \subseteq [n - 1]$, define product distribution $\dists_S$ to be $\prod_i \disti$ where $\disti = \posdist$ if $i \in S$, and $\disti = \negdist$ if $i \in [n-1] \setminus S$, and $F_n$ is a point mass on value~$1$.
For any $j \in [n - 1]$ and $S \subseteq [n - 1]$, distinguishing $\dists_{S \cup \{j\}}$ and $\dists_{S \setminus \{j\}}$ by samples from the product distribution is no easier than distinguishing $\posdist$ and $\negdist$, because the coordinates of the samples not from $\disti[j]$ contains no information about~$\disti[j]$.  

\begin{corollary}
\label{cor:lb-kl}
For any $j \in [n - 1]$ and $S \subseteq [n - 1]$, given $t$ i.i.d.\@ samples from $\dists_{S \cup \{j\}}$ or $\dists_{S \setminus \{j\}}$, if $t \leq \frac n {80 \distconst^2 \eps^2}$, no algorithm~$\diffalg$ can do the following: when the samples are from $\dists_{S \cup \{j\}}$, $\diffalg$ outputs~$\dists_{S \cup \{j\}}$ with probability at least $\frac 2 3$, and when the samples are from $\dists_{S \setminus \{j\}}$, $\diffalg$ outputs~$\dists_{S \setminus \{j\}}$ with probability at least~$\frac 2 3$.
\end{corollary}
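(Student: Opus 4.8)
The plan is a standard simulation (``padding'') reduction to Corollary~\ref{cor:lb-kl-single}. The crucial structural fact is that $\dists_{S\cup\{j\}}$ and $\dists_{S\setminus\{j\}}$ agree on every coordinate except the $j$-th: in both product distributions, coordinate $i\in S\setminus\{j\}$ has marginal $\posdist$, coordinate $i\in[n-1]\setminus(S\cup\{j\})$ has marginal $\negdist$, and coordinate $n$ is the point mass on $1$; only coordinate $j$ differs, being $\posdist$ in $\dists_{S\cup\{j\}}$ and $\negdist$ in $\dists_{S\setminus\{j\}}$. Consequently a sample from either distribution carries distinguishing information only through its $j$-th coordinate, and the other coordinates can be generated by the reduction itself.

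First I would assume, towards a contradiction, that some algorithm $\diffalg$ performs the task described in the statement using $t\le\frac{n}{80\distconst^2\eps^2}$ samples. From $\diffalg$ I would build an algorithm $\diffalg'$ for the two-distribution problem of Corollary~\ref{cor:lb-kl-single}: given $t$ i.i.d.\ samples $\val^{(1)},\ldots,\val^{(t)}$ that come from $\posdist$ or from $\negdist$, for each $\ell$ construct an $n$-vector $\vals^{(\ell)}$ by placing $\val^{(\ell)}$ in coordinate $j$ and filling the remaining coordinates with fresh independent draws (independent across $\ell$ as well) from the fixed marginals --- $\posdist$ on $S\setminus\{j\}$, $\negdist$ on $[n-1]\setminus(S\cup\{j\})$, and the point mass $1$ on coordinate $n$ --- all sampled internally by $\diffalg'$. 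Then $\diffalg'$ runs $\diffalg$ on $\vals^{(1)},\ldots,\vals^{(t)}$ and outputs $\posdist$ iff $\diffalg$ outputs $\dists_{S\cup\{j\}}$, and $\negdist$ iff $\diffalg$ outputs $\dists_{S\setminus\{j\}}$.

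The verification is immediate: since the padded coordinates use exactly the marginals shared by the two hypotheses and are independent of the $j$-th coordinate and across samples, the vectors $\vals^{(1)},\ldots,\vals^{(t)}$ are i.i.d.\ draws from $\dists_{S\cup\{j\}}$ when the inputs are from $\posdist$, and i.i.d.\ draws from $\dists_{S\setminus\{j\}}$ when the inputs are from $\negdist$. Hence $\diffalg'$ inherits the success guarantee of $\diffalg$ --- outputting $\posdist$ with probability at least $\frac23$ on samples from $\posdist$, and $\negdist$ with probability at least $\frac23$ on samples from $\negdist$ --- while using only $t\le\frac{n}{80\distconst^2\eps^2}$ samples, contradicting Corollary~\ref{cor:lb-kl-single}. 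I do not expect a genuine obstacle in this argument; the only point needing care is the observation that the padding marginals are identical under the two hypotheses, so the simulated product samples match the intended distributions exactly rather than merely approximately.
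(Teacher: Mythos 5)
Your reduction is correct and is exactly the argument the paper intends: the paper justifies Corollary~\ref{cor:lb-kl} with the one-line remark that coordinates other than $j$ carry no information about $\disti[j]$, and your padding simulation (filling the known, shared marginals internally and invoking Corollary~\ref{cor:lb-kl-single}) is the explicit version of that same reasoning. No gap; you have simply spelled out the details the paper leaves implicit.
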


We now use Corollary~\ref{cor:lb-kl} to derive an information-theoretic lower bound on estimating utilities for monotone bidding strategies, for distributions in $\{\dists_S\}_{S \subseteq [n]}$.

\begin{proof}[Proof of Theorem~\ref{thm:lower-bound-learning-util}]
Without loss of generality, assume $n$ is odd.  
Let $S$ be an arbitrary subset of~$[n - 1]$ of size either $\lfloor n/2 \rfloor$ or $\lceil n/ 2 \rceil$.
We focus on the interim utility of bidder~$n$ with value~$1$ and bidding $\frac 1 2$.  
Denote this bidding strategy by~$\strati[n]$.
The other bidders may adopt one of two bidding strategies.
One of them is $\posbid$: $\posbid(0) = 0$ and $\posbid(1) = \frac 1 2 + \eta$ for sufficiently small $\eta > 0$.  
The other bidding strategy $\negbid(\cdot)$ maps all values to~$0$.
For $T \subseteq [n-1]$, let $\strats_T$ be the profile of bidding strategies where $\strati = \posbid$ for bidder $i \in T$, and $\strati = \negbid$ for bidder $i \in [n-1] \setminus T$.

For the distribution $\dists_S$, the interim utility of bidder $n$ is 
\begin{align*}
     & \utili[n]\left(1, \frac 1 2, \strats_T\right) 
    \frac 1 2  \Prx{\max_{i \in T} \vali = 0} \\
    & =  \frac 1 2 
    \left(1 - 
    \frac{1 + \distconst \eps}{n}
    \right)^{|S \cap T|}
    \left(1 - 
    \frac{1 - \distconst \eps}{n}
    \right)^{|T\setminus S|}
    \\
    & = \frac 1 2
    \left(
    1 - \frac{1 + \distconst \eps}{n}
    \right)^{|T|}
    \left(
    \frac{n - 1 + \distconst \eps}{n - 1 - \distconst \eps}
    \right)^{|T \setminus S|}.
\end{align*}
Therefore, for $T, T' \subseteq [n-1]$ with $|T| = |T'| $,
\begin{align*}
     \frac{\utili[n](1, \frac 1 2, \strats_T)}{\utili[n](1, \frac 1 2, \strats_{T'})} ={}& \left(
    1 + \frac{2\distconst \eps / (n-1)}{1 - \frac{\distconst \eps}{n-1}} 
    \right)^{|T\setminus S| - |T' \setminus S|} \\
     \geq{}& 1 + \frac{2\distconst \eps}{n-1} \cdot (|T \setminus S| - |T' \setminus S|);
\end{align*}
Suppose $|T \setminus S| \geq |T' \setminus S|$ and $|T| = |T'| \geq \lfloor \frac n 2 \rfloor$, then
\begin{align}
 & \utili[n]\left(1, \frac 1 2, \strats_T\right) - \utili[n]\left(1, \frac 1 2, \strats_{T'}\right) \\
 & \ge (|T \setminus S| - |T' \setminus S|) \cdot \frac {2\distconst \eps}{n-1} \cdot \utili[n]\left(1, \frac 1 2, \strats_{T'} \right) 
\notag \\
 & \geq (|T \setminus S| - |T' \setminus S|) \cdot \frac {2\distconst \eps}{n-1} \cdot \frac 1 {8 e^2}, 
 \label{eq:util-diff-eps}
\end{align}
where the last inequality is because $\utili[n](1, \frac 1 2, \strats_{T'}) \ge \frac{1}{2} (1 - \frac{2}{n})^n = \frac{1}{2} [(1 - \frac{2}{n})^\frac{n}{2}]^2\ge \frac{1}{2}  (\frac{1}{2e})^2 = \frac{1}{8e^2}$. 

Now suppose an algorithm~$\alg$ $(\eps, \delta)$-estimates the utilities of all monotone bidding strategies with $t \leq \frac{n}{80\distconst^2 \eps^2}$ samples~$\mathcal S$.
Define $\diffalg: \mathbb R_+^{n \times t} \times \mathbb N \to 2^{[n-1]}$ be a function that outputs, among all $T\subseteq [n-1]$ of size~$k$, the one that maximizes bidder~$n$'s utility when other bidders bid according to strategy $\strats_T$.  
Formally, 
\begin{align*}
    \diffalg(\sampleSet, k) = \argmax_{T \subseteq [n-1], |T| = k} \alg\left(\sampleSet, n, 1, (\strats_{T}, \strati[n]) \right),
\end{align*}

By Definition~\ref{def:util-learn-ensemble}, for any $S$ with $|S| = \lfloor n / 2 \rfloor$, for samples drawn from~$\dists_S$, with probability at least $1 - \delta$,
\begin{equation*}
    \alg(\sampleSet, n, 1, (\strats_{[n-1]\setminus S}, \strati[n]) ) \\
    \geq \utili[n]\left(1, \frac 1 2, \strats_{[n-1] \setminus S} \right) - \eps;
\end{equation*}
and for any $T \subseteq[n-1]$ with $|T| = \lceil n / 2 \rceil$,
\begin{equation*}
    \alg(\sampleSet, n, 1, (\strats_T, \strati[n]) )
    \leq \utili[n]\left(1, \frac 1 2, \strats_T \right) + \eps.
\end{equation*}
Therefore, for $W = \diffalg(\sampleSet, \lceil n / 2 \rceil)$, 
\begin{align*}
    \utili[n]\left(1, \frac 1 2, \strats_W \right) \geq 
    \utili[n] \left(1, \frac 1 2, \strats_{[n-1]\setminus S} \right) - 2\eps.
\end{align*}
Since $|W| = [n-1]\setminus S = \lceil n / 2 \rceil$, by \eqref{eq:util-diff-eps},
\begin{align*}
    \left(\lceil \frac n 2 \rceil - |W \setminus S| \right) \cdot \frac{\distconst \eps}{(n-1)4e^2} \leq 2\eps.
\end{align*}
So
\begin{align*}
    |W \cap S| \leq (n - 1) \cdot \frac{8e^2}{\distconst}.
\end{align*}
In other words, with probability at least $ 1- \delta$, $\diffalg(\sampleSet, \lceil n / 2 \rceil)$ is the complement of~$S$ except for at most $\frac{8e^2}{\distconst}$ fraction of the coordinates in $[n-1]$.

Similarly, for $S$ of cardinality $\lceil n / 2 \rceil$, 
\begin{align*}
    |\diffalg(\sampleSet, \lceil n / 2 \rceil) \cap S| \leq (n - 1) \cdot \frac{8e^2}{\distconst} + 1.
\end{align*}
Take $\diffconst$ to be $\frac{8e^2}{\distconst}$. We have $\diffconst<\frac 1 {20}$. For all large enough $n$ and all $S$ of size~$\lfloor n / 2 \rfloor$ or $\lceil n / 2 \rceil$, with probability at least $1 - \delta$, $\diffalg(\sampleSet, \lceil n / 2 \rceil)$ correctly outputs the elements not in~$S$ with an exception of at most $\diffconst$ fraction of coordinates.

Let $\coordsets$ be the set of all subsets of $[n-1]$ of size either $\lceil n / 2 \rceil$ or $\lfloor n / 2 \rfloor$.
Consider any~$S \in \coordsets$.  
Let $\diffset(S) \subseteq [n-1]$ denote the set of coordinates whose memberships in~$S$ are correctly predicted by $\diffalg(\sampleSet, \lceil n / 2 \rceil)$ with probability at least $2/3$; that is, $i \in \diffset(S)$ if{f} with probability at least $2/3$, $\diffalg(\sampleSet, \lceil n / 2 \rceil)$ is correct about whether $i \in S$.  
Let the cardinality of~$|\diffset(S)|$ be $z(n-1)$. Suppose we draw coordinate $i$ uniformly at random from $[n-1]$, and independently draw $t$ samples $\sampleSet$ from $\dists_S$, then the probability that $\diffalg(\sampleSet, \lceil n / 2 \rceil)$ is correct about whether $i\in S$ satisfies:
\begin{align*}
     & \Pr_{i, \sampleSet}\Big[ \diffalg(\sampleSet, \lceil n / 2 \rceil)\text{ is correct about whether }i\in S \Big] \\
     & \geq (1 - \diffconst) (1 - \delta)  \geq 0.9,
\end{align*}
and 
\begin{align*}
    & \Pr_{i, \sampleSet}\Big[ \diffalg(\sampleSet, \lceil n / 2 \rceil) \text{ is correct about whether }i\in S \Big] \\
    & \le \Prx[i]{i\in \diffset(S)}\cdot 1 + \Prx[i]{i\notin \diffset(S)}\cdot\frac{2}{3}  \\
    & = z\cdot 1 + (1-z)\cdot \frac 2 3,
\end{align*} 
which implies $z > 0.6$.  
If a pair of sets $S$ and~$S'$ differ in only one coordinate~$i$, and $i \in \diffset(S) \cap \diffset(S')$, then $\diffalg(\cdot)$ serves as an algorithm that tells apart $\dists_S$ and~$\dists_{S'}$, contradicting Corollary~\ref{cor:lb-kl}.  
We now show, with a counting argument, that such a pair of $S$ and~$S'$ must exist.

Since for each $S \in \coordsets$, $|\diffset(S)| \geq 0.6(n-1)$, there exists a coordinate $i \in [n-1]$ and $\mathcal T \subseteq \coordsets$, with $|\mathcal T| \geq 0.6 |\coordsets|$, such that for each $S \in \mathcal T$, $i \in \diffset(S)$. 
But $\coordsets$ can be decomposed into $|\coordsets| / 2$ pairs of sets, such that within each pair, the two sets differ by one in size, and precisely one of them contains coordinate~$i$.  
Therefore among these pairs there must exist one $(S, S')$ with $S, S' \in \mathcal T$, i.e., $i \in \diffset(S)$ and $i \in \diffset(S')$.
Using $\diffalg$, which is induced by~$\alg$, we can tell apart $\dists_S$ and~$\dists_{S'}$ with probability at least $2/3$, which is a contradiction to Corollary~\ref{cor:lb-kl}.
This completes the proof of Theorem~\ref{thm:lower-bound-learning-util}.
\end{proof}

\end{document}